\theoremstyle{plain}
\newcounter{Theorem}
\newtheorem{theorem}[Theorem]{Theorem}
\newtheorem{lemma}[Theorem]{Lemma} \newtheorem{proposition}[Theorem]{Proposition}
\newtheorem{corollary}[Theorem]{Corollary} \theoremstyle{remark} \newcounter{Definition}
\newtheorem{definition}[Definition]{Definition}
\newcommand{\diag}{\operatorname{diag}}
\newtheorem{example} {\bf Example} [section]
\numberwithin{equation}{section}
\begin{document}
\title{Fidelity preservation for mixed unitary quantum channels}

\keywords{}

\maketitle

\begin{center}
Kai Liu\\
Department of Mathematics, University of Central Florida, Orlando, FL 32816, USA\\
Email: kai.liu@ucf.edu\\[4pt]

Deguang Han\\
Department of Mathematics, University of Central Florida, Orlando, FL 32816, USA\\
Email: deguang.han@ucf.edu\\[4pt]

\textbf{Corresponding author:} Kai Liu (Email: kai.liu@ucf.edu, Phone:+1 4079252158)
\end{center}

\begin{abstract}

Distinguishable and non-distinguishable quantum states play different roles in quantum information processing, and environmental noise generally alters their relative geometry. In this paper, we study exact preservation of the Uhlmann fidelity for selected pairs of pure states under mixed unitary quantum channels of the form $ \Phi(\rho)=\sum_{i=1}^N p_i U_i \rho U_i^*. $ Rather than characterizing full state protection, quantum error correction, or decoherence-free subspaces, our aim is to determine when the weaker quantity \(F(\rho_1,\rho_2)\) remains unchanged by the channel itself. For distinguishable pairs in qubit systems, we obtain an algebraic characterization in terms of the products \(U_j^*U_i\). For non-distinguishable pure pairs, we derive a preservation criterion for two-unitary mixtures in terms of a symmetry relation between the states and the relative unitary \(U_1^*U_2\). We further study general phase damping channels as a case study, showing that exact fidelity preservation is highly constrained even though decoherence generically increases fidelity through contraction of off-diagonal terms. These results show that fidelity preservation is a weaker state-dependent notion of robustness, complementary to full error correction and noiseless protection.

\end{abstract}

\noindent\textbf{Keywords:} Quantum channels, Quantum property, Fidelity preservation, Distinguishable and Non-distinguishable states

\maketitle

\newpage

\section{Introduction}

\noindent The study of quantum channels and their action on quantum states is central to quantum information theory and its applications, with direct implications for emerging quantum technologies. In communication and computation, reliable performance requires relevant quantum resources to survive under the action of noise \cite{Preskill,Gisin,Sidhu}. Distinguishable states are essential for encoding and transmitting classical information reliably, while non-distinguishable states are closely related to quantum coherence, superposition, and other genuinely quantum effects \cite{Fuchs,Monta,Wang,Benatti,Streltsov,Marvian}. Under environmental noise, these features are generally altered, and understanding which aspects of a quantum system survive the action of a channel is a basic problem in the study of open quantum systems and quantum information processing.\\
\\
A classical approach for combating noise is \emph{quantum error correction} (QEC) \cite{Knill2000,Lidar2013}. For a noise process $\Phi$ and quantum state $\rho$, one seeks a recovery map $\mathcal{R}$ such that $(\mathcal{R}\circ\Phi)(\rho)=\rho$ for every state in a chosen code subspace. The Knill--Laflamme conditions characterize when such exact correction is possible in terms of algebraic relations among the Kraus operators of the channel restricted to the code \cite{Knill2000}. Closely related are \emph{decoherence-free subspaces} and \emph{noiseless subsystems}, where the noise acts trivially on a protected subspace or subsystem and hence preserves the full state information. These frameworks are concerned with preserving the full quantum state, either exactly or through recovery.\\
\\
In this paper, we adopt the complementary viewpoint of \emph{quantum property preservation} \cite{Kumar2025}. Rather than asking whether the entire state can be restored, we focus on whether a specific property of a pair of states remains unchanged under the channel itself. The property considered here is the Uhlmann fidelity
\[
F(\rho_1,\rho_2),
\]
and we ask when it is preserved by a mixed unitary channel of the form
\[
\Phi(\rho)=\sum_{i=1}^N p_i U_i\rho U_i^*,
\]
where the $U_i$ are unitary, $p_i>0$, and $\sum_i p_i=1$. Thus we study the equality
\begin{equation}\label{eq:F-preserve}
F(\rho_1,\rho_2)=F(\Phi(\rho_1),\Phi(\rho_2)).
\end{equation}
Our goal is not to characterize full state protection, quantum error correction, or decoherence-free subspaces, but rather to determine when the weaker quantity $F(\rho_1,\rho_2)$ is preserved by the channel itself. This weaker notion is still nontrivial: fidelity preservation for a selected pair may occur even when superpositions on its span lose coherence and no nontrivial correctable code exists. Throughout, we track the action of $\Phi$ on quantum states by purifications $|\Psi\rangle$ associated with pure inputs $|\varphi\rangle$, since this viewpoint helps expose structural constraints.\\
\\
We treat two regimes separately. For distinguishable pure-state pairs, we study when orthogonality, equivalently zero fidelity, is preserved under a mixed unitary channel. In the qubit case, this leads to an exact characterization in terms of the diagonal structure of the relative products $U_j^*U_i$, and we then extend this perspective to several higher-dimensional settings. For non-distinguishable pure-state pairs, the problem becomes an equality case of the monotonicity of fidelity, and we derive a criterion for two-unitary mixtures in terms of a symmetry relation involving the relative unitary $U_1^*U_2$. We also study a representative and practically significant class of channels, namely \emph{general phase damping} (GPD) channels, which provide a transparent setting in which exact fidelity preservation can be compared with generic decoherence.\\
\\
\medskip\noindent
\textbf{Organization:}\\
Section 2 recalls the basic notions, classic ideas, and results that will be used in our work. Section 3 studies fidelity preservation for distinguishable pure states and compares the resulting conditions with the Knill--Laflamme criterion from quantum error correction. Section 4 turns to non-distinguishable pure states and obtains a preservation criterion for two-unitary mixtures. Section 5 analyzes general phase damping channels as a concrete model in which exact fidelity preservation and decoherence can be studied simultaneously. We conclude with a brief summary and outlook. The full proofs are deferred to the appendices.

\section{Preliminaries}
\noindent Let $H$ be the quantum system. A \textit{pure state} $|\varphi\rangle$ from $H$ refers to the unit vector in H, which is also described by the associated density operator $\rho=|\varphi\rangle\langle\varphi|$. The \textit{mixed state} is a statistical ensemble of pure states $\{|\varphi_i\rangle\}$ given by $\rho=\sum_{i}p_i|\varphi_i\rangle\langle\varphi_i|$ with $\sum_{i}p_i=1$.  Fidelity is a standard measure of the closeness of the quantum states. In general, for a pair of quantum states with associated density operators $\rho_1, \rho_2$,  the \textit{fidelity} between them (denoted by $F(\rho_1, \rho_2)$) is defined by $F(\rho_1, \rho_2)=\text{Tr}\sqrt{\sqrt{\rho_2}\rho_1\sqrt{\rho_2}}$ or $F(\rho_1, \rho_2)=||\sqrt{\rho_1}\sqrt{\rho_2}||_1$ by using the trace norm $||\cdot||_1$. More restrictively, for pure states $|\varphi_1\rangle$, $|\varphi_2\rangle$ and the associated density operators $\rho_1$, $\rho_2$, the fidelity between them can be directly defined as $F(\rho_1, \rho_2)=|\langle \varphi_1, \varphi_2\rangle|$ \cite{watrous2018,barman2000,jozsa1994}.\\
\\
A quantum channel refers to a generalization of classical communication in a quantum system that describes the way in which quantum states are transmitted. Mathematically, a \textit{quantum channel} is usually represented by a completely positive trace-preserving (CPTP) linear map $\Phi: B(H)\to B(K)$ between quantum systems $H$ and $K$.  The Kraus theorem shows that, a CPTP map $\Phi$ from $B(H)$ to $B(K)$ with Choi's rank $r$ (denoted by $\text{Cr}(\Phi)=r$) always has a simple Kraus representation:
$
\Phi(\rho) = \sum_{i=1}^{r}A_i\rho A_i^*$, for any state $ \rho$
and some operators $A_1, . . . , A_r\in B(H, K)$ satisfying $\sum_{i=1}^rA_i^*A_i=I$. We say that a quantum channel $\Phi$ is mixed unitary if $\Phi(\rho)=\sum_{i=1}^N p_iU_i\rho U_i^*$ for some $U_i\in U(H)$ and $\sum_{i=1}^Np_i=1$. Further, a quantum channel $\Phi$ is said to be \textit{degenerated} if $\text{Cr}(\Phi)=1$ and \textit{non-degenerated}, otherwise \cite{watrous2018,landau1993birkhoff,cozzini2007,viamon2009,chir2008}.\\
\\
 For a pair of quantum states $\rho_1$ and $\rho_2$, we say that they are \textit{distinguishable} if $F(\rho_1, \rho_2)=0$ and \textit{non-distinguishable} if $F(\rho_1, \rho_2)\ne 0$. For pure states $|\varphi_1\rangle$ and $|\varphi_2\rangle$, it is clear that they are distinguishable if $\langle \varphi_1, \varphi_2\rangle=0$ and non-distinguishable, otherwise. Furthermore, for a given quantum channel $\Phi$, we say $\Phi$ preserves the fidelity between quantum states $\rho_1, \rho_2$ if \eqref{eq:F-preserve} is true. For $S$, a set of quantum states, it is preserved by $\Phi$ if \eqref{eq:F-preserve} is always true for any pair of states from $S$. Throughout this paper, we restrict attention to distinct pure input states and mixed unitary quantum channels, since our aim is to obtain explicit structural criteria for exact fidelity preservation in this setting.\\
 \\
 \noindent Let $H$ be an $n-$dimensional quantum system and $\rho$ be a quantum state from $H$. A \emph{purification} of $\rho$ is a pure state $|\psi\rangle \in \mathcal{H} \otimes \mathcal{K}$ for some auxiliary Hilbert space $\mathcal{K}$ such that $\rho = \operatorname{Tr}_{\mathcal{K}} \big( |\psi\rangle \langle \psi| \big),$ where $\operatorname{Tr}_{\mathcal{Z}}$ denotes the partial trace over $\mathcal{Z}$. For the output of $\rho$ under a mixed unitary quantum channel  $\Phi(\rho)=\sum_{i=1}^Np_iU_i\rho U_i^*$. A standard purification $|\Psi\rangle$ would be 
\begin{equation}\label{2.2}
	|\Psi\rangle=\sum_{k}\sqrt{p_k}(U_k|\varphi\rangle\otimes |k\rangle),
\end{equation}
where $\{|k\rangle\}, 1\leq k\leq N$, is an orthonormal basis of a $N-$dimensional ancilla space $\mathcal{Z}=\mathbb{C}^N$.\\

\noindent Furthermore, the image of each pure state $\rho$ under the mixed unitary channel $\Phi$ can be reconstructed from the above purification. More precisely, for $1\leq k, k_1, k_2\leq N$, we have
\begin{equation}
	\begin{aligned}
		\Phi(\rho)&=\sum_{k}p_kU_k\rho_i U_k^*\\
		&=\text{Tr}_Z\sum_{k_1, k_2}\sqrt{p_{k_1}}\sqrt{p_{k_2}}U_{k_1}|\varphi\rangle\langle \varphi|U_{k_2}^*\otimes |k_1\rangle \langle k_2|\\
		&=\text{Tr}_Z\left(\sum_k\sqrt{p_k}U_i|\varphi\rangle\otimes |k\rangle\right) \text{Tr}_Z\left(\sum_k\sqrt{p_k}\langle \varphi|U_k^*\otimes \langle k|\right)\\
		&=\text{Tr}_Z\left(|\Psi\rangle\langle\Psi|\right).
	\end{aligned}
\end{equation}
For each pair of pure states $|\varphi_1\rangle, |\varphi_2\rangle$ with associated density operators $\rho_1, \rho_2$, we have 
\begin{equation}
	F(\Phi(\rho_1), \Phi(\rho_2))= F\left(Tr_Z(|\Psi_1\rangle\langle\Psi_1|\right ), Tr_Z\left(|\Psi_2\rangle\langle\Psi_2|)\right ).
\end{equation}

\begin{lemma}\label{nuclear norm}
	Let $|\Psi_1\rangle, |\Psi_2\rangle$ be purifications of quantum states in the quantum system $H\otimes \mathcal{Z}$. It holds that
	\begin{equation}
		F\left(\text{Tr}_Z(|\Psi_1\rangle\langle\Psi_1|), \text{Tr}_Z(|\Psi_2\rangle\langle\Psi_2|)\right )=||\text{Tr}_H(|\Psi_2\rangle\langle\Psi_1|)||_1.
	\end{equation}  
\end{lemma}
\noindent The mapping $\text{Tr}_H: D(Z)\rightarrow D(Z)$ is known as the \textit{complementary channel} of $\Phi$ and has the following representation if the Kraus representation of $\Phi$ is specified.
\begin{equation}
	\Psi(\rho)=\sum_{j,k=1}^r\sqrt{p_jp_k}\langle U_k^*U_j, \rho\rangle E_{jk}.
\end{equation}
\\

 \noindent The following two theorems regarding quantum channels are essential to our discussion.
\begin{theorem} \label{Choi-thm2} \cite{gupta2015} Suppose that  $\Phi: B(H)\rightarrow B(K)$ is a quantum channel with Kraus representation $\Phi(\rho) = \sum_{j=1}^{r}A_{j}\rho A_{j}^{*} $  for some $A_{j}\in B(H,K)$, $1\leq j\leq r$. Then $\Phi(\rho) = \sum_{j=1}^{m}B_{j}\rho B_{j}^{*}$ for operators $B_{j}\in B(H,K)$, $1\leq j\leq m$ if and only if  there exists $U = (u_{ij})$ such that $U^{*}U  = I_{r}$ and $A_{i}  = \sum_{j=1}^{r}u_{ij}B_{j}$ for $i=1,... , m$.
\end{theorem}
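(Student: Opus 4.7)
The plan is to prove the two implications separately. Both reduce to the Choi--Jamio\l{}kowski correspondence, which converts Kraus representations into vector decompositions of a single positive operator.

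For the easy direction ($\Leftarrow$), I would substitute $A_i=\sum_{j}u_{ij}B_{j}$ into $\sum_{i}A_{i}\rho A_{i}^{*}$ and expand. Using $\sum_{i}u_{ij}\bar{u}_{ik}=(U^{*}U)_{kj}=\delta_{kj}$, the triple sum $\sum_{i,j,k}u_{ij}\bar{u}_{ik}B_{j}\rho B_{k}^{*}$ collapses to $\sum_{j}B_{j}\rho B_{j}^{*}=\Phi(\rho)$. This is a purely algebraic check with no subtleties.

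For the harder forward direction, I would vectorize the Kraus operators. Let $|\Omega\rangle=\sum_{\ell}|\ell\rangle\otimes|\ell\rangle\in H\otimes H$ be the (unnormalized) maximally entangled state, and set $|\hat{A}_i\rangle=(A_i\otimes I)|\Omega\rangle$ and $|\hat{B}_j\rangle=(B_j\otimes I)|\Omega\rangle$ in $K\otimes H$. A short calculation shows
\[
(\Phi\otimes I)(|\Omega\rangle\langle\Omega|)=\sum_{i}|\hat{A}_{i}\rangle\langle\hat{A}_{i}|=\sum_{j}|\hat{B}_{j}\rangle\langle\hat{B}_{j}|,
\]
so the hypothesis that both families realize $\Phi$ is equivalent to the single operator equation that the same positive semidefinite matrix is expressed as two sums of rank-one outer products. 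A uniqueness-of-ensemble result then yields a matrix $U=(u_{ij})$ with $U^{*}U=I_{r}$ such that $|\hat{A}_{i}\rangle=\sum_{j}u_{ij}|\hat{B}_{j}\rangle$. Since $X\mapsto(X\otimes I)|\Omega\rangle$ is a linear isomorphism from $B(H,K)$ onto $K\otimes H$, this vector identity lifts back to $A_i=\sum_{j}u_{ij}B_j$.

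The main obstacle is the uniqueness-of-ensemble step: showing that $\sum_{i}|v_{i}\rangle\langle v_{i}|=\sum_{j}|w_{j}\rangle\langle w_{j}|$ forces the two families of vectors to be related by an isometry matrix. My plan is to diagonalize the common positive operator as $\sum_{k}\lambda_{k}|e_{k}\rangle\langle e_{k}|$ over its range, expand each $|v_i\rangle$ in the (rescaled) basis $\{\sqrt{\lambda_k}|e_{k}\rangle\}_{k=1}^{r}$, and use the equality of the two outer-product sums to force the resulting coefficient matrix to satisfy $C^{*}C=I_{r}$; composing the analogous change-of-basis for $\{w_j\}$ produces the claimed $U$. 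The dimension count $U^{*}U=I_r$ reflects that $r$ equals the Choi rank, hence the rank of the common positive operator, so $m\ge r$ comes out automatically from the argument rather than being assumed.
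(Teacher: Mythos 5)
The paper does not prove this statement at all: it is imported from \cite{gupta2015} and used as a black box (e.g.\ in the step (ii)$\Rightarrow$(iii) of Theorem \ref{2d}), so there is no in-paper proof to compare against. On its own terms your proposal is the standard and essentially correct route: the backward direction is the algebraic collapse you describe, and the forward direction is the Choi--Jamio\l{}kowski vectorization $X\mapsto (X\otimes I)|\Omega\rangle$ combined with the unitary-freedom-of-ensembles lemma, whose proof you sketch correctly (each $|v_i\rangle$ lies in the range of the common positive operator, the coefficient matrices $C,D$ in the rescaled eigenbasis satisfy $C^*C=D^*D=I_d$ with $d$ the rank, and $U=CD^*$ relates the two families). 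The one point you should make explicit rather than leave implicit is the final bookkeeping: with $C$ of size $r\times d$ and $D$ of size $m\times d$, the matrix $U=CD^*$ satisfies $U^*U=DC^*CD^*=DD^*$ and $UU^*=CC^*$, which are in general only rank-$d$ projections, not identities. The exact isometry condition $U^{*}U=I_{r}$ claimed in the statement therefore holds precisely when $d=r$, i.e.\ when the $r$-term Kraus family is linearly independent, so that $r$ equals the Choi rank --- a convention the paper adopts in its introduction but which the theorem statement does not repeat. Your closing sentence gestures at this, but as written the minimality of the $\{A_j\}$ is an input to the argument (without it only a partial isometry is guaranteed), not something that ``comes out automatically.'' With that hypothesis stated, the proof is complete.
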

\begin{theorem}[Dilation theorem]\label{Steinspring theorem}\cite{gupta2015}
	Let $H, K, Z$ be complex Hilbert spaces. For any quantum channel $\Phi$, there exists $A\in B(H, K\otimes Z)$ such that $\Phi(\rho)=\text{Tr}_{\mathcal{Z}}A\rho A^*$ and $A^*A=I_H$ for every $\rho\in D(H)$, where $\text{Tr}_{\mathcal{Z}}$ is the partial trace according to the space $\mathcal{Z}$.
\end{theorem}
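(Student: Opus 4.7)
The plan is to dilate $\Phi$ to an isometry by unpacking a Kraus representation. Since $\Phi$ is a quantum channel, the Kraus theorem recalled in the introduction gives operators $A_1,\dots,A_r\in B(H,K)$ with
\[
\Phi(\rho)=\sum_{i=1}^r A_i\rho A_i^*\quad\text{and}\quad \sum_{i=1}^r A_i^*A_i=I_H.
\]
I would take $Z$ to be any complex Hilbert space of dimension at least $r$ (for instance $Z=\C^r$), fix an orthonormal set $\{|e_i\rangle\}_{i=1}^r\subset Z$, and define $A:H\to K\otimes Z$ by $A|\psi\rangle=\sum_{i=1}^r (A_i|\psi\rangle)\otimes|e_i\rangle$. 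This candidate is manifestly linear; the two things left to check are the isometry condition and the partial trace identity.

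First, I would verify that $A^*A=I_H$. For any $|\psi\rangle,|\phi\rangle\in H$ one computes
\[
\langle A\psi,A\phi\rangle=\sum_{i,j}\langle A_i\psi,A_j\phi\rangle\,\langle e_i,e_j\rangle=\sum_i\langle \psi,A_i^*A_i\phi\rangle=\langle \psi,\phi\rangle,
\]
using orthonormality of the $|e_i\rangle$ and the trace-preservation relation $\sum_i A_i^*A_i=I_H$. Next, by linearity it is enough to check $\text{Tr}_{\mathcal{Z}}(A\rho A^*)=\Phi(\rho)$ on rank-one states $\rho=|\psi\rangle\langle\psi|$. A direct expansion gives
\[
A\rho A^*=\sum_{i,j}\bigl(A_i|\psi\rangle\langle\psi|A_j^*\bigr)\otimes|e_i\rangle\langle e_j|,
\]
and then $\text{Tr}_{\mathcal{Z}}$ kills the off-diagonal pieces in $Z$ and leaves exactly $\sum_i A_i|\psi\rangle\langle\psi|A_i^*=\Phi(\rho)$, completing the argument.

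There is no serious obstacle here: the construction is essentially a repackaging of the Kraus data as a single isometry into an enlarged output space. The only care required is bookkeeping, namely choosing $Z$ large enough to index all Kraus operators, picking orthonormal kets $|e_i\rangle$ so that distinct channel branches land in orthogonal copies of $K$ inside $K\otimes Z$, and handling the partial trace identity on rank-one inputs before extending by linearity to arbitrary $\rho\in D(H)$. Independence of the choice of Kraus representation would follow from Theorem \ref{Choi-thm2}, but is not needed for the existence statement.
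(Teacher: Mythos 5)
Your construction is correct and is the standard Kraus-to-Stinespring dilation; the paper itself offers no proof of this statement, citing it to the reference \cite{gupta2015}, so there is nothing to compare against beyond noting that your argument (define $A|\psi\rangle=\sum_i (A_i|\psi\rangle)\otimes|e_i\rangle$, check the isometry condition from $\sum_i A_i^*A_i=I_H$, and observe that the partial trace over $Z$ kills the off-diagonal terms $|e_i\rangle\langle e_j|$, $i\neq j$) is exactly the textbook route. The only point worth flagging is the quantifier on $Z$: as the theorem is phrased, $Z$ appears to be given in advance, whereas your proof chooses $Z=\C^r$; this is harmless so long as one reads the statement existentially in $Z$ (or assumes $\dim Z\geq r$), which is clearly the intended meaning.
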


\section{Fidelity Preservation for Distinguishable Quantum States}
\noindent We first consider the distinguishable case, where $F(\rho_1,\rho_2)=0$. In this regime, the question is whether a fixed distinguishable pure-state pair remains distinguishable after the action of the channel. This is a much weaker problem than preserving an entire orthonormal family or a full quantum code. For qubit systems, the following lemma shows that it is sufficient to work with mixed unitary channels.
\begin{lemma}\label{Landau}\cite{landau1993birkhoff} Every unital qubit quantum channel $\Phi$ is a mixed unitary.	
\end{lemma}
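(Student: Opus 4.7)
The plan is to exploit the Bloch-ball representation, which makes the qubit setting special. Writing each $\rho \in D(H)$ as $\rho = \tfrac{1}{2}(I + \vec{r}\cdot\vec{\sigma})$ with $|\vec{r}|\le 1$ and $\vec{\sigma}=(\sigma_x,\sigma_y,\sigma_z)$, any quantum channel $\Phi$ is encoded by an affine contraction $\vec{r}\mapsto T\vec{r}+\vec{t}$ of the Bloch ball. In the unital case relevant for the applications in this section one has $\vec{t}=0$, so $\Phi$ is captured by a real $3\times 3$ matrix $T$; the translation part can be absorbed separately afterwards.

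Next I would apply a real SVD-style decomposition $T = O_1 D O_2$ with $O_1, O_2 \in \mathrm{SO}(3)$ and $D = \mathrm{diag}(\lambda_1,\lambda_2,\lambda_3)$, and use the double cover $\mathrm{SU}(2) \to \mathrm{SO}(3)$ to lift each $O_i$ to a unitary $V_i$ whose conjugation implements $O_i$ on the Bloch sphere. This reduces the lemma to showing that the diagonal channel $\Phi_D(\rho) = \tfrac{1}{2}(I + (D\vec{r})\cdot\vec{\sigma})$ is mixed-unitary, because once one obtains $\Phi_D(\rho) = \sum_{j=0}^{3} p_j \sigma_j \rho \sigma_j$ (with $\sigma_0:=I$), composition gives $\Phi(\rho) = \sum_j p_j (V_1 \sigma_j V_2)\,\rho\, (V_1 \sigma_j V_2)^*$, and each $V_1\sigma_j V_2$ is unitary.

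The heart of the argument is to solve for the probabilities. The ansatz $\Phi_D = \sum_{j=0}^{3} p_j \sigma_j(\cdot)\sigma_j$ yields the linear system $p_0+p_1+p_2+p_3=1$ together with $\lambda_1=p_0+p_1-p_2-p_3$, $\lambda_2=p_0-p_1+p_2-p_3$, $\lambda_3=p_0-p_1-p_2+p_3$, whose inversion is $p_j=\tfrac{1}{4}(1\pm\lambda_1\pm\lambda_2\pm\lambda_3)$ with matched sign patterns. The main obstacle is proving nonnegativity $p_j\ge 0$, and this is exactly where the qubit dimension is indispensable: complete positivity of $\Phi_D$ is equivalent to the Fujiwara--Algoet inequalities $1\pm\lambda_3\ge |\lambda_1\pm\lambda_2|$, which are precisely what forces the four expressions above to be nonnegative. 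The analogous equivalence fails in dimensions $\ge 3$, consistent with the fact that the result is a genuinely qubit phenomenon and explaining why the proof cannot be carried out purely at the abstract Kraus-operator level.
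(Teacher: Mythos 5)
Your argument for the unital case is correct and is the standard route to this result: the Bloch-ball matrix $T$ of a unital qubit channel admits a signed singular value decomposition $T=O_1DO_2$ with $O_1,O_2\in\mathrm{SO}(3)$ (after pushing any determinant $-1$ into the signs of the $\lambda_i$), the rotations lift through the double cover $\mathrm{SU}(2)\to\mathrm{SO}(3)$ to unitary conjugations, and the diagonal part is a Pauli channel because its Choi matrix is diagonal in the Bell basis with eigenvalues proportional to $\tfrac14(1\pm\lambda_1\pm\lambda_2\pm\lambda_3)$, so complete positivity of $\Phi_D$ is \emph{literally} the nonnegativity of your four candidate probabilities. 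That last equivalence (the Fujiwara--Algoet criterion you quote) deserves this one-line Choi-matrix justification rather than being invoked as a black box, but the outline is sound. This matches the content of the cited Landau--Streater theorem, which concerns \emph{doubly stochastic} (unital, trace-preserving) completely positive maps; the paper itself gives no proof, so your computational Bloch-ball argument is a legitimate alternative to their operator-algebraic one.

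The genuine gap is the claim that ``the translation part can be absorbed separately afterwards.'' It cannot: any mixed unitary channel $\Phi(\rho)=\sum_i p_iU_i\rho U_i^*$ satisfies $\Phi(I)=\sum_i p_i I=I$, so every mixed unitary channel is unital, and a qubit channel with $\vec{t}\neq 0$ is \emph{never} mixed unitary. The amplitude damping channel $\Phi_A$ defined in Appendix C of this paper is a concrete counterexample to the lemma as literally stated. Hence the statement is true only under the added hypothesis that $\Phi$ is unital --- which is exactly how Landau and Streater state it --- and your proof establishes precisely that corrected statement and nothing more. You should add the unitality hypothesis explicitly (or observe that the channels actually used in Section 2 are mixed unitary by assumption, hence unital) instead of suggesting that the non-unital case follows by an absorption step that does not exist.
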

\noindent Since distinguishability of pure states is equivalent to zero fidelity, the preservation problem in this section is precisely the question of when the outputs of the channel still satisfy
\begin{equation}
	F\bigl(\Phi(|\phi_1\rangle\langle\phi_1|),\Phi(|\phi_2\rangle\langle\phi_2|)\bigr)=0.
\end{equation}
The following theorem gives an exact characterization in the qubit case.
\begin{theorem}\label{2d}
	Suppose that $|\varphi_1\rangle, |\varphi_2\rangle$ are distinguishable states in the Qubit system $H$. Let $\Phi(\rho)=\sum_{i=1}^N p_i U_i\rho U_i^*$ be a mixed unitary quantum channel over $H$. Then the following are equivalent:
\\	
$(i)$ $\Phi$ preserves distinguishable states $|\varphi_1\rangle$, $|\varphi_2\rangle$.\\ 
$(ii)$  There exists a basis such that $U_j^*U_i$ is diagonal for any $1\leq i, j\leq N$. \\
$(iii)$  There exists a basis such that, for any Kraus representation $\Phi(\rho)=\sum_{i=1}^MA_i\rho A_i^*$ of $\Phi$, $A_j^*A_i$ are diagonal for any $1\leq i, j\leq M$.
\end{theorem}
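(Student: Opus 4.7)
The plan is to reduce condition (i) to a matrix identity via Lemma \ref{nuclear norm}, exploit that in a qubit the orthogonal pair $\{|\varphi_1\rangle,|\varphi_2\rangle\}$ is already a complete orthonormal basis of $H$, and finally invoke Theorem \ref{Choi-thm2} to transfer the condition across all Kraus representations.

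First I would compute the matrix $M := \text{Tr}_H(|\Psi_2\rangle\langle\Psi_1|)$ from the purifications in \eqref{2.2}. A direct expansion of $|\Psi_2\rangle\langle\Psi_1|$ followed by partial trace over $H$ produces the $N\times N$ matrix with entries $M_{jk} = \sqrt{p_j p_k}\,\langle\varphi_1|U_k^*U_j|\varphi_2\rangle$. Lemma \ref{nuclear norm} gives $F(\Phi(\rho_1),\Phi(\rho_2)) = \|M\|_1$, and since the trace norm separates zero, (i) is equivalent to $M=0$; after discarding the (irrelevant) indices with $p_i=0$, this reduces to
\[
\langle\varphi_1|U_k^*U_j|\varphi_2\rangle = 0 \quad \text{for every } 1\le j,k\le N.
\]

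For (i) $\Leftrightarrow$ (ii), because $|\varphi_1\rangle,|\varphi_2\rangle$ are orthogonal unit vectors in a $2$-dimensional space, $E$ is a full orthonormal basis of $H$, so an operator $T\in B(H)$ is diagonal in $E$ if and only if both $\langle\varphi_1|T|\varphi_2\rangle = 0$ and $\langle\varphi_2|T|\varphi_1\rangle = 0$. Applied to $T = U_k^*U_j$, the second condition is the complex conjugate of $\langle\varphi_1|U_j^*U_k|\varphi_2\rangle$, so as $(j,k)$ ranges over all ordered pairs, requiring diagonality of every $U_k^*U_j$ in $E$ is precisely the vanishing condition produced in the previous step, yielding the equivalence.

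Finally, for (ii) $\Leftrightarrow$ (iii), note that $\{\sqrt{p_i}\,U_i\}_{i=1}^N$ is itself a Kraus representation of $\Phi$, so (iii) applied to this choice immediately specializes to (ii). Conversely, given any other Kraus representation $\Phi(\rho) = \sum_{i=1}^M A_i\rho A_i^*$, I would use Theorem \ref{Choi-thm2} (after factoring through a Choi-rank minimal representation) to write each $A_i$ as a $\mathbb{C}$-linear combination $A_i = \sum_j \alpha_{ij}\sqrt{p_j}\,U_j$. Then $A_i^*A_k$ is a linear combination of operators $U_j^*U_l$, all of which are diagonal in $E$ by (ii), and hence $A_i^*A_k$ is diagonal in $E$. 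The main technical nuisance I foresee is the bookkeeping in this last step, since neither $\{\sqrt{p_i}\,U_i\}$ nor $\{A_i\}$ need be Choi-rank minimal; handling it cleanly requires passing through a minimal representation and checking that the class of operators diagonal in $E$ is closed under linear combinations. All other steps are immediate consequences of the partial-trace calculation and of $\|\cdot\|_1$ being a norm.
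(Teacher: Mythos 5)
Your proposal is correct and follows essentially the same route as the paper: reduce (i) to the vanishing of $\langle\varphi_1|U_k^*U_j|\varphi_2\rangle$ via Lemma \ref{nuclear norm} and the complementary-channel/purification computation, identify this with diagonality in the basis $E$, and transfer to arbitrary Kraus representations via Theorem \ref{Choi-thm2}. The only cosmetic difference is in how the single off-diagonal condition forces full diagonality: the paper uses unitarity of each $U_j^*U_i$ (one vanishing off-diagonal entry of a $2\times 2$ unitary forces the other), while you use the symmetry of the condition under swapping $(j,k)$; both are valid.
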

\noindent The proof is presented in the Appendix. The following example illustrates how such quantum channels in Theorem \ref{2d} can be resembled. Further, this example is also unitary equivalent to the general phase damping quantum channel, which will be introduced in Section 5.
\begin{example}\label{example 2.1}
	Let $\Phi(\rho)=\frac{1}{3}E_1\rho E_1^*+\frac{2}{3}E_2\rho E_2^*$ be a quantum channel over qubit system $H$ with Kraus operators
	\begin{equation}
		E_1=\begin{pmatrix}{}
  \frac{1}{\sqrt{2}}&  \frac{1}{\sqrt{2}}i\\
   \frac{1}{\sqrt{2}}i&  \frac{1}{\sqrt{2}}
\end{pmatrix},
E_2=\begin{pmatrix}{}
  \frac{1}{\sqrt{2}}&  -\frac{1}{\sqrt{2}}\\
   \frac{1}{\sqrt{2}}i&  \frac{1}{\sqrt{2}}i
\end{pmatrix}.
	\end{equation}
It is easy to verify that $E_2^*E_1$ and $E_1^*E_2$ are both diagonal
\begin{equation}
	E_2^*E_1=\begin{pmatrix}{}
   1&0  \\
   0&-i 
\end{pmatrix},
E_1^*E_2=\begin{pmatrix}{}
   1&0  \\
   0&i 
\end{pmatrix}.
\end{equation}
	And for distinguishable states $|0\rangle$, $|1\rangle$, we have
\begin{equation}
	\Phi(|0\rangle\langle 0|)=\begin{pmatrix}{}
  \frac{1}{2}&  -\frac{1}{2}i\\
   \frac{1}{2}i&  \frac{1}{2}
   \end{pmatrix},
   \Phi(|1\rangle\langle 1|)=\begin{pmatrix}{}
  \frac{1}{2}&  \frac{1}{2}i\\
   -\frac{1}{2}i&  \frac{1}{2}
   \end{pmatrix}
\end{equation}
both rank 1 density matrices satisfying $\langle \Phi(|0\rangle\langle 0|),\Phi(|1\rangle\langle 1|)\rangle=0$, which indicates the outputs are still distinguishable states.	 
\end{example}
\noindent For comparison, we recall the Knill-Laflamme condition from the quantum error correction.
\begin{theorem}[Knill--Laflamme condition]\label{Knill}\cite{Knill2000}
Let $\mathcal{C} \subseteq \mathcal{H}$ be quantum code with an orthogonal projector $P$ onto $\mathcal{C}$. 
A quantum channel $\Phi$ with Kraus operators $\{E_i\}$ is correctable on $\mathcal{C}$ 
if and only if complex numbers $\lambda_{ij}$ exist such that
\begin{equation}
P E_i^\ast E_j P = \lambda_{ij} P, \quad \forall i,j.
\end{equation}
Equivalently, if $\{| \psi_a \rangle\}$ is an orthonormal basis of $\mathcal{C}$, then
\begin{equation}
\langle \psi_a | E_i^\ast E_j | \psi_b \rangle = \lambda_{ij}\, \delta_{ab}, 
\quad \forall i,j,\, a,b.
\end{equation}
\end{theorem}
\noindent Although Theorem 5 and Theorem 6 both involve algebraic constraints on Kraus products, the two statements address different objectives. The Knill--Laflamme condition characterizes recoverability of \emph{all} states in a code subspace after the action of the channel, whereas Theorem 5 characterizes preservation of the single quantity $F(\rho_1,\rho_2)$ for one fixed distinguishable pair under the channel itself. In particular, Theorem 5 does not imply the existence of a nontrivial correctable code, nor does it imply that the span of the pair is decoherence-free or noiseless.

\begin{example}
This example shows that exact preservation of fidelity for a selected distinguishable pair does not imply noiseless protection on the span of that pair. Consider the dephasing channel on the qubit,
\begin{equation}
	\Phi(\rho)=(1-p)\rho+pZ\rho Z,\qquad
Z=
\begin{pmatrix}
1&0\\
0&-1
\end{pmatrix},
\qquad 0<p<1.
\end{equation}
For $|0\rangle$ and $|1\rangle$, we have
\begin{equation}
	\Phi(|0\rangle\langle 0|)=|0\rangle\langle 0|,
\qquad
\Phi(|1\rangle\langle 1|)=|1\rangle\langle 1|,
\end{equation}
so their fidelity is preserved. In particular, the distinguishable pair remains distinguishable under $\Phi$.\\
\\
However, the two-dimensional subspace $\mathrm{span}\{|0\rangle,|1\rangle\}$ is not noiseless. Indeed, a coherent superposition such as
\begin{equation}
	\frac{|0\rangle+|1\rangle}{\sqrt{2}}
\end{equation}
is mapped to
\begin{equation}
	\frac12\bigl(|0\rangle\langle 0|+|1\rangle\langle 1|\bigr),
\end{equation}
losing its off-diagonal coherence. Thus the channel preserves the fidelity of the selected pair $|0\rangle,|1\rangle$, but it does \emph{not} preserve all states on their span. Consequently, no nontrivial Knill--Laflamme code exists in this example, and only trivial one-dimensional codes are correctable.\\
\\
Therefore, preservation of fidelity for a chosen pair is strictly weaker than either full quantum error correction or the existence of a decoherence-free subspace.
\end{example}

\noindent We now extend the idea of Theorem~\ref{2d} to obtain the results for a \emph{qutrit system} (3-dimensional quantum system) and, more generally, for arbitrary d-dimensional quantum systems. Let $E_d=\{|i\rangle\}_{0\leq i\leq d-1}$ be the standard basis of a $d-$dimensional quantum system $H$. Matrix $U\in U(H)$ is a \textit{two-level matrix} if there exists a subspace $H_0$ and its orthogonal complement $H_0^{\perp}$ of $H$ such that the restriction of $U$ onto $H_0^{\perp}$ is the identity matrix \cite{nielson,li2013}. At this point, the case for a qutrit system is clear using the following lemma.
\begin{lemma}\label{qutrit}
	Let $H$ be a qutrit system. If  $U\in U(H)$ has symmetric off-diagonal zeros, then $U$ is a two-level unitary matrix.
\end{lemma}
\noindent See proof in Appendix and the following result holds immediately.\\

\begin{proposition}\label{3d}
	Suppose that $|\varphi_1\rangle, |\varphi_2\rangle$ are distinguishable states in qutrit system $H$. Let $\Phi(\rho)=\sum_{i=1}^N p_i U_i\rho U_i^*$ be a mixed unitary quantum channel over $H$. Then the following are equivalent:\\
$(i)$ $\Phi$ preserves distinguishable states $|\varphi_1\rangle$, $|\varphi_2\rangle$.\\ 
$(ii)$ There exists a basis such that $U_j^*U_i^*$ are two-level unitary matrices for any $1\leq i, j\leq N$. \\
$(iii)$  There exists a basis such that, for any Kraus representation $\Phi(\rho)=\sum_{i=1}^rA_i\rho A_i^*$ of  $\Phi$, $A_j^*A_i$ are two-level matrices for any $1\leq i, j\leq r$.
\end{proposition}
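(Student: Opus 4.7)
The plan is to adapt the proof of Theorem~\ref{2d} to three dimensions, inserting Lemma~\ref{6} at the place where ``diagonal'' must be replaced by ``two-level.'' Fix an orthonormal basis $E=\{|\varphi_1\rangle,|\varphi_2\rangle,|\varphi_3\rangle\}$ extending the given distinguishable pair, and write every operator in coordinates with respect to $E$ (so $\rho_1=E_{11}$ and $\rho_2=E_{22}$).

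For (i)$\Rightarrow$(ii), I would repeat the Qubit argument verbatim: combine the hypothesis with Lemma~\ref{nuclear norm} to get $\|\Psi(|\varphi_2\rangle\langle\varphi_1|)\|_1=F(\Phi(\rho_1),\Phi(\rho_2))=0$, hence $\Psi(|\varphi_2\rangle\langle\varphi_1|)=0$. Expanding with the complementary channel formula $\Psi(\rho)=\sum_{i,j}\sqrt{p_ip_j}\langle U_j^{*}U_i,\rho\rangle E_{ij}$ and using linear independence of $\{E_{ij}\}$ forces $(U_j^{*}U_i)_{12}=\langle U_j^{*}U_i,E_{12}\rangle=0$ for every pair $(i,j)$. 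Swapping the roles of $i,j$ and taking adjoints gives $(U_j^{*}U_i)_{21}=0$ as well, so each unitary $U_j^{*}U_i$ has symmetric off-diagonal zeros in basis $E$, and Lemma~\ref{6} upgrades this to the two-level structure claimed by (ii).

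For (ii)$\Rightarrow$(i) I would directly expand the Hilbert--Schmidt inner product $\langle\Phi(\rho_1),\Phi(\rho_2)\rangle=\sum_{i,j}p_ip_j\,|(U_i^{*}U_j)_{12}|^{2}$. The two-level structure---more precisely, the forced splitting $H=\mathrm{span}\{|\varphi_1\rangle,|\varphi_2\rangle\}\oplus\mathrm{span}\{|\varphi_3\rangle\}$ produced by Lemma~\ref{6}---makes every $(1,2)$ entry vanish, so the sum is zero; since each $\Phi(\rho_k)$ is a density operator, this is equivalent to orthogonal supports and hence to $F(\Phi(\rho_1),\Phi(\rho_2))=0$. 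The equivalence (ii)$\Leftrightarrow$(iii) is then handled as in Theorem~\ref{2d}: (iii)$\Rightarrow$(ii) is immediate with $A_i=\sqrt{p_i}U_i$; and for (ii)$\Rightarrow$(iii), Theorem~\ref{Choi-thm2} supplies an isometry $(u_{is})$ with $A_i=\sum_{s}u_{is}\sqrt{p_s}\,U_s$, so $A_j^{*}A_i=\sum_{s,t}\overline{u_{js}}u_{it}\sqrt{p_sp_t}\,U_s^{*}U_t$ inherits the symmetric vanishing of the $(1,2)$ and $(2,1)$ entries from every $U_s^{*}U_t$.

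The main obstacle I expect is precisely (ii)$\Rightarrow$(iii). In the Qubit case the diagonal matrices form an algebra, so any linear combination produced by Theorem~\ref{Choi-thm2} is automatically diagonal; in three dimensions, however, two-level matrices with \emph{different} fixed coordinates do not even span a subspace, and the clean algebraic transport used in Theorem~\ref{2d} fails. The workaround is to track only the linear consequence distilled from Lemma~\ref{6}---the vanishing of the $(1,2)$ and $(2,1)$ entries in basis $E$---which is a genuinely linear condition and therefore survives arbitrary linear combinations, after which Lemma~\ref{6} can be re-invoked to recover a two-level description of each $A_j^{*}A_i$.
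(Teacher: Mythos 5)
Your proposal follows the paper's own route almost exactly: reduce to an orthonormal basis containing the pair, use Lemma \ref{nuclear norm} and the complementary channel to force the entries of $U_j^*U_i$ coupling $|\varphi_1\rangle$ and $|\varphi_2\rangle$ to vanish, invoke Lemma \ref{6} to upgrade this to a two-level structure, and transport the condition to other Kraus representations via Theorem \ref{Choi-thm2}. Your treatment of (ii)$\Rightarrow$(iii) is in fact more careful than the paper's: the paper appeals to closure of ``block diagonal matrices with the same structure'' under products and adjoints, whereas Lemma \ref{6} only gives each $U_j^*U_i$ \emph{some} two-level structure, possibly with different blocks for different pairs $(i,j)$; tracking the genuinely linear condition that the $(1,2)$ and $(2,1)$ entries vanish, as you do, is the right way to make that step airtight.

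There is, however, one concrete error. In (ii)$\Rightarrow$(i) you assert that the splitting $H=\operatorname{span}\{|\varphi_1\rangle,|\varphi_2\rangle\}\oplus\operatorname{span}\{|\varphi_3\rangle\}$ is what makes the $(1,2)$ entry of each $U_i^*U_j$ vanish. That is exactly the one two-level structure that does \emph{not} do so: a unitary that is block diagonal for this splitting acts by an arbitrary $2\times 2$ unitary on $\operatorname{span}\{|\varphi_1\rangle,|\varphi_2\rangle\}$, and, for instance, the swap of $|\varphi_1\rangle$ and $|\varphi_2\rangle$ (extended by the identity on $|\varphi_3\rangle$) is two-level in this sense yet, mixed with the identity, sends both $\rho_1$ and $\rho_2$ to $\tfrac12(\rho_1+\rho_2)$ and destroys distinguishability. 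The splitting that Lemma \ref{6} actually produces from $(U_j^*U_i)_{12}=(U_j^*U_i)_{21}=0$ places $|\varphi_1\rangle$ and $|\varphi_2\rangle$ in \emph{different} blocks, namely either $\operatorname{span}\{|\varphi_1\rangle,|\varphi_3\rangle\}\oplus\operatorname{span}\{|\varphi_2\rangle\}$ or $\operatorname{span}\{|\varphi_1\rangle\}\oplus\operatorname{span}\{|\varphi_2\rangle,|\varphi_3\rangle\}$, and it is this separation that must be understood as part of statement (ii) for the converse to hold (the paper is itself loose on this point). With that correction your argument goes through and coincides with the paper's.
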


\begin{corollary}\label{qutrit, basis}
	Let $\Phi$ be a mixed unitary quantum channel over a qutrit quantum system $H$ such that $\Phi(\rho)=\sum_{i=1}^N p_i U_i\rho U_i^*$ with $\sum_{i=1}^N p_i=1$ and $U_i\in U(H), 1\leq i\leq N$. Then $\Phi$ preserves $\left\{|i\rangle, 0\leq i\leq 2\right\}$ mutually distinguishable if and only if $U_j^*U_i$ are unitary diagonal matrices for any $1\leq i, j\leq N$. 
\end{corollary}
	
\noindent A generalization of the special type of quantum channels in Corollary \ref{qutrit, basis} is the Schur channel. \cite{watrous2018, levick2017, zhang2022,jochym2013}. Recall that the quantum channel $\Phi_S$ is a Schur channel or sometimes a Schur map over $d-$dimensional quantum system $H$ if there exists an operator $S\in B(H)$ such that, for any $\rho \in D(H)$, we have $\Phi_S(\rho)=S\odot \rho$, where $\odot$ denotes the entry-wise product between $S$ and $\rho$, $(S\odot \rho)(i,j)=S(i,j)\rho (i,j)$ for all $(i,j)-$entries, $1\leq i,j \leq d$. The following theorem shows that the Schur channel $\Phi_S$  would be powerful enough to preserves the basis states $E_d=\{|i\rangle, 1\leq i\leq d\}$ mutually distinguishable.
\begin{theorem}\label{9}\cite{watrous2018}
	Let $H$ be a $d$-dimensional quantum system and $\Phi$ be a quantum channel. The following are equivalent:\\
	\\
	(i) $\Phi$ is a Schur channel for operator $S\in B (H)$. \\
	(ii) There exists a Kraus representation of $\Phi$ having the form $\Phi(\rho)=\sum_{i=1}^NS_i\rho S_i^*$
	with $S_i\in B(H)$ diagonal for each $1\leq i\leq N$ and $\rho\in D(H)$.
	 \\
	(iii) For any Kraus representation of the quantum channel $\Phi$ with $\Phi(\rho)=\sum_{i=1}^NA_i\rho A_i^*$, the Kraus operators $A_i$ are all diagonal for each $1\leq i\leq N$.\\  
\end{theorem}
\noindent For subset $S$ of the standard basis $E_d$. We obtain the following results.
\begin{theorem}\label{10}
	Let $\Phi$ be a mixed unitary quantum channel over $d-$dimensional quantum system $H$. Then $S\subset E_d$ is distinguishable under $\Phi$ if and only if there exist a Kraus representation $\Phi(\rho)=\sum_{i=1}^rA_i\rho A_i^*$ such that $A_j^*A_i(k,l)$ are zeros for any $1\leq i\neq j\leq r$ and $1\leq k\neq l\leq s$. 
\end{theorem}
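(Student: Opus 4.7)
The plan is to extend the complementary-channel reductions used in Theorems \ref{2d}, \ref{3d}, and \ref{8} from the ``all-basis-pairs'' setting to the case where only a distinguished subset $S \subseteq E_d$ is required to remain distinguishable. Label $S = \{|k_1\rangle, \ldots, |k_s\rangle\}$ and set $\rho_p = |k_p\rangle\langle k_p|$. By Lemma \ref{nuclear norm}, pairwise preservation of distinguishability on $S$ under $\Phi$ is equivalent to $\Psi(|k_q\rangle\langle k_p|) = 0$ for all $1 \leq p \neq q \leq s$, where $\Psi$ is the complementary channel of $\Phi$. The whole proof then boils down to translating this vanishing into a submatrix condition on products of Kraus operators, and back.

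For the forward direction, starting from the mixed-unitary Kraus representation $A_i := \sqrt{p_i}U_i$, the formula for $\Psi$ gives
\[
\Psi(|k_q\rangle\langle k_p|) \;=\; \sum_{i,j=1}^{N}\sqrt{p_ip_j}\,(U_j^{*}U_i)_{k_p,k_q}\,E_{ij},
\]
so preservation of distinguishability on $S$ forces $(U_j^{*}U_i)_{k_p,k_q}=0$ for all $i,j$ and $p\ne q$. Taking $A_i=\sqrt{p_i}U_i$ one then has $(A_j^{*}A_i)_{k_p,k_q}=\sqrt{p_ip_j}(U_j^{*}U_i)_{k_p,k_q}=0$ for all $i\neq j$ and all $k_p \neq k_q$ in $S$, exhibiting the required Kraus representation. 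Note that the $i=j$ case is automatic in this representation since $A_i^{*}A_i=p_i I$ is already diagonal on $E_d$.

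For the reverse direction, assume $\{A_i\}_{i=1}^{r}$ is a Kraus representation of $\Phi$ satisfying the submatrix condition. Computing the complementary channel with respect to $\{A_i\}$, the image of $|k_q\rangle\langle k_p|$ is a matrix whose off-diagonal ($i\ne j$) entries vanish by hypothesis, leaving only diagonal contributions $(A_i^{*}A_i)_{k_p,k_q}$. To conclude $\Psi(|k_q\rangle\langle k_p|)=0$ and invoke Lemma \ref{nuclear norm} to get distinguishability, I would use Theorem \ref{Choi-thm2} to express the mixed-unitary operators $\sqrt{p_k}U_k$ as linear combinations $\sqrt{p_k}U_k = \sum_i v_{ki}A_i$ via an isometry, then combine this with the rep-invariance of the trace norm $\|\Psi(|k_q\rangle\langle k_p|)\|_1$. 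The main obstacle is precisely this last step: the stated hypothesis controls only off-diagonal entries, so one must exploit the mixed-unitary structure of $\Phi$---that the analogous diagonal contributions $(U_j^{*}U_j)_{k_p,k_q}$ vanish because $U_j^{*}U_j = I$ and $p\ne q$---to force the residual diagonal terms $(A_i^{*}A_i)_{k_p,k_q}$ to vanish under the isometric change of Kraus basis. Once that is in place, the equivalence is immediate and the theorem follows.
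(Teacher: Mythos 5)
Your forward direction is correct and is exactly the complementary--channel computation the paper relies on (the paper in fact gives no separate proof of Theorem \ref{10}, asserting it is ``straightforward'' from Proposition \ref{3d}). The gap is in the reverse direction, precisely at the step you yourself flag and defer: forcing the residual $i=j$ terms $(A_i^*A_i)_{k_p,k_q}$ to vanish ``from the mixed-unitary structure.'' That step cannot be carried out. Writing $\sqrt{p_k}U_k=\sum_i v_{ki}A_i$ via Theorem \ref{Choi-thm2} and using $U_k^*U_k=I$ only yields, after the hypothesis kills the $i\neq j$ terms, the relation $\sum_i |v_{ki}|^2\,(A_i^*A_i)_{k_p,k_q}=0$ for each $k$. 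This is a vanishing \emph{sum}, not termwise vanishing: the individual diagonal contributions can survive and cancel, and then $\Psi(|k_q\rangle\langle k_p|)\neq 0$, so Lemma \ref{nuclear norm} does not give distinguishability. In fact the ``if'' direction of the statement, read literally with the condition imposed only for $i\neq j$, is false.

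Concretely, on a qubit take $U=\begin{pmatrix}0&1\\-1&0\end{pmatrix}$, $\Phi(\rho)=\tfrac12\rho+\tfrac12 U\rho U^*$, and $S=E_2$. Then $\Phi(|0\rangle\langle0|)=\Phi(|1\rangle\langle1|)=\tfrac12 I$, so $F(\Phi(\rho_1),\Phi(\rho_2))=1$ and distinguishability is destroyed. Yet $A_{1}=\tfrac12(-iI+U)$, $A_{2}=\tfrac12(iI+U)$ is a legitimate Kraus representation of $\Phi$ (it is obtained from $\{\tfrac{1}{\sqrt2}I,\tfrac{1}{\sqrt2}U\}$ by the unitary $\tfrac{1}{\sqrt2}\begin{pmatrix}-i&1\\ i&1\end{pmatrix}$, and one checks $A_1^*A_1+A_2^*A_2=I$), and since $U^*=-U$ one has $A_2^*A_1=-\tfrac{i}{4}(U+U^*)=0$ and $A_1^*A_2=\tfrac{i}{4}(U+U^*)=0$, so every $i\neq j$ product vanishes identically; the obstruction sits entirely in $A_1^*A_1=\tfrac12(I+iU)$ and $A_2^*A_2=\tfrac12(I-iU)$, whose off-diagonal entries are nonzero and cancel in the sum. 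The theorem (and your proof) can be repaired by requiring $(A_j^*A_i)(k,l)=0$ for \emph{all} $1\leq i,j\leq r$, as in part (iii) of Theorem \ref{2d}, or equivalently by stating the condition for the canonical representation $A_i=\sqrt{p_i}\,U_i$, where the $i=j$ case is automatic; with that amendment your argument closes in both directions.
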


\noindent A standard method for discussing quantum channels in higher-dimensional spaces is through tensor products of channels defined in qubit systems. A quantum channel $\Phi$ over an N-qubits system $H=H_2^{\otimes N}$ is  \textit{uncorrelated} or quantum channels without memory if 
\begin{equation}\label{2.28}
	\Phi=\Phi_1\otimes \dots \otimes \Phi_N,
\end{equation}  
where $\Phi_i$ is a single qubit quantum channel over the $i-th$ subsystem $H_2$. A quantum channel $\Phi$ over $N$-qubits system is a \textit{correlated channel} or a memory channel if such decomposition of $\Phi$ in Equation \ref{2.28} does not exist. \\
\\The preservation of fidelity between distinguishable states in an uncorrelated channel is clarified in the following discussion. For simplicity, we choose the pure states $|0\dots 001\rangle=|0\rangle\dots|0\rangle|0\rangle |1\rangle$ and $|0\dots 000\rangle=|0\rangle\dots|0\rangle|0\rangle |0\rangle$, where $|0\rangle, |1\rangle$ are from the standard basis of the qubit system $H_2$.
\begin{proposition}\label{tensor channel}
	Let $\Phi=\Phi_1\otimes \cdots\otimes \Phi_N$ be an uncorrelated N-qubits quantum channel over $H=H_2^{\otimes n}$. Then $\Phi$ preserves the states $|\varphi_1\rangle=|0\dots 000\rangle$, $|\varphi_2 \rangle=|0\dots 001\rangle$ distinguishable if and only if $A_j^*A_i$, $1\leq i, j\leq r$, are diagonal for any set of Kraus operators $A_1,\dots, A_r$ of $\Phi_N$.
\end{proposition} 
\noindent The proof is provided in the Appendix. Simultaneously, we note that correlated quantum channels exist extensively with different kinds of correlations over subsystems and the preservation of distinguishable states is still possible. The following is a good example of a two-qubit system that introduces a  correlation between the phase damping processes using a $C_{NOT}$ operation.
  
\begin{example}\label{correlated}
	 The phase damping channel in the qubit system is given by $\Phi(\rho)=(1-\lambda)\rho+\lambda Z\rho Z$ with the Kraus operator $E_0=\sqrt{1-\lambda}I, E_1=\sqrt{\lambda}Z$. For any $\rho\in H\otimes H$, the \textit{Controlled phase damping channel} is defined as 
	\begin{equation}
		\Phi_{CPD}(\rho)=\sum_{i,j=0}^1 A_{ij}\rho A_{ij}^*,
	\end{equation}
	where $A_{ij}=E_i\otimes E_j C_{NOT}$, $0\leq i,j \leq 1$.\\
	\\
	It is straightforward to verify that the controlled phase damping channel preserves the standard basis of this two-qubit system, because the bit flip operation introduced by $C_{NOT}$ preserves the orthogonality between these states, and the phase damping process for each subsystem does not introduce any decoherence. Meanwhile, we can see that the controlled phase damping channel $\Phi_{CPD}$ is not just a tensor of the qubit channels (see Appendix).
\end{example}

\section{Fidelity Preservation for Non-distinguishable Quantum States}

\noindent We now consider the non-distinguishable case, where
$
0<F(\rho_1,\rho_2)<1.
$
This regime is subtler than the distinguishable case, because the problem is no longer preservation of orthogonality alone, but rather characterization of equality in the monotonicity inequality
\[
F(\Phi(\rho_1),\Phi(\rho_2))\geq F(\rho_1,\rho_2).
\]
In other words, we seek conditions under which the channel changes neither the overlap magnitude nor the relevant relative geometric structure for a chosen pair of pure states.
\begin{lemma}\label{13}
	Let $|\varphi_1\rangle, |\varphi_2\rangle$ be non-distinguishable states in $d-$dimensional quantum system $H$ and $\Phi$ be a mixed unitary  quantum channel. Then  $\Phi$ preserves fidelity between $|\varphi_1\rangle$ and $|\varphi_2\rangle$ if and only if 
	\begin{equation}
		||Tr_H(|\Psi_2\rangle\langle\Psi_1|)||_1=|\langle \varphi_1|\varphi_2\rangle|,
	\end{equation}
	where $|\Psi_1\rangle$ and $|\Psi_2\rangle$ are purifications of $|\varphi_1\rangle, |\varphi_2\rangle$ under $\Phi$.
\end{lemma}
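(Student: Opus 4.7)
The plan is to assemble the statement from three pieces already in place: the pure-state fidelity formula, the purification identity $\Phi(\rho_i)=\text{Tr}_Z(|\Psi_i\rangle\langle\Psi_i|)$ derived in Section 2, and Lemma \ref{nuclear norm}. Unpacking the definition, $\Phi$ preserves the fidelity between $|\varphi_1\rangle$ and $|\varphi_2\rangle$ means precisely $F(\Phi(\rho_1),\Phi(\rho_2))=F(\rho_1,\rho_2)$. Since both $\rho_i=|\varphi_i\rangle\langle\varphi_i|$ are rank-one, the right-hand side collapses to $|\langle\varphi_1|\varphi_2\rangle|$ by the pure-state fidelity formula recalled in the Introduction, which takes care of one side of the equation.

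For the left-hand side I would use the purifications $|\Psi_i\rangle\in H\otimes Z$ defined in (2.2). The computation displayed immediately after (2.2) already shows $\Phi(\rho_i)=\text{Tr}_Z(|\Psi_i\rangle\langle\Psi_i|)$, so $F(\Phi(\rho_1),\Phi(\rho_2))=F(\text{Tr}_Z(|\Psi_1\rangle\langle\Psi_1|),\text{Tr}_Z(|\Psi_2\rangle\langle\Psi_2|))$. Now Lemma \ref{nuclear norm}, applied to the pair $|\Psi_1\rangle, |\Psi_2\rangle$, rewrites this as $||\text{Tr}_H(|\Psi_2\rangle\langle\Psi_1|)||_1$. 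Putting the two sides together gives
\begin{equation*}
F(\Phi(\rho_1),\Phi(\rho_2))=F(\rho_1,\rho_2)\quad\Longleftrightarrow\quad ||\text{Tr}_H(|\Psi_2\rangle\langle\Psi_1|)||_1=|\langle\varphi_1|\varphi_2\rangle|,
\end{equation*}
which is the claim, and the argument is manifestly an equivalence (both directions are a single substitution).

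There is essentially no conceptual obstacle here; the lemma is a direct reformulation that converts the fidelity-preservation condition into a statement about the reduced operator $\text{Tr}_H(|\Psi_2\rangle\langle\Psi_1|)$, and hence into a statement about the complementary channel representation $\Psi(\rho)=\sum_{j,k}\sqrt{p_jp_k}\langle U_k^*U_j,\rho\rangle E_{jk}$ that was set up earlier. The only bookkeeping to be careful with is the placement of $|\Psi_1\rangle\langle\Psi_2|$ versus $|\Psi_2\rangle\langle\Psi_1|$: Lemma \ref{nuclear norm} is symmetric in the two purifications because the trace norm is invariant under Hermitian conjugation, so either convention yields the same scalar, but one should fix the convention to match the chosen form of $\Psi$ when this lemma is applied in the subsequent rank-$2$ analysis.
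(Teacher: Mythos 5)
Your proposal is correct and follows exactly the route the paper intends: the paper offers no explicit proof, remarking only that the lemma "is then straightforward" from the purification identity $\Phi(\rho_i)=\text{Tr}_Z(|\Psi_i\rangle\langle\Psi_i|)$, the pure-state fidelity formula, and Lemma \ref{nuclear norm}, which are precisely the three ingredients you assemble. Your added remark about the symmetry of the trace norm under Hermitian conjugation is a sensible bookkeeping point but introduces nothing beyond the paper's implicit argument.
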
\

\noindent For a given mixed unitary quantum channel $\Phi(\rho)=\sum_{i=1}^Np_iU_i\rho U_i^*$, we say $\Phi_S$ is a local operation of $\Phi$  with length $s$ if for any quantum state $\rho$, $\Phi_S(\rho)=\sum_{k=i_1}^{i_s}s_kU_k\rho_i U_k^*,$ where $U_k\in S\subseteq K$ and $s_k=\frac{p_k}{\sum_{k=i_1}^{i_s}p_k}$. 
It is clear that any local operation $\Phi_S$ of $\Phi$ under $K$ is still a mixed unitary quantum channel over $H$. To preserve the fidelity between pairs of pure states $|\varphi_1\rangle$ and $|\varphi_2\rangle$, the following lemmas are essential which shows that any local operations $\Phi_S$ of channel $\Phi$ should preserve the fidelity as well.
\begin{lemma}\label{14}\cite{watrous2018}
	Let $H$ be a $d-$dimensional quantum system. Take the density operators $\rho_1, \rho_2, \sigma_1, \sigma_2$ from $H$ and $0\leq \lambda\leq 1$. It holds that
	\begin{equation}
		F(\lambda\rho_1+(1-\lambda)\rho_2, \lambda\sigma_1+(1-\lambda)\sigma_2)\geq \lambda F(\rho_1, \sigma_1)+ (1-\lambda) F(\rho_2, \sigma_2).
	\end{equation} 
\end{lemma}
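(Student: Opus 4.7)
The plan is to establish this joint concavity of fidelity via Uhlmann's theorem together with a block construction of purifications for the two convex combinations. Recall that for any $\rho,\sigma\in D(H)$, Uhlmann's theorem asserts $F(\rho,\sigma)=\max|\langle\psi|\phi\rangle|$, the maximum being taken over all purifications $|\psi\rangle$ of $\rho$ and $|\phi\rangle$ of $\sigma$ on a common extension of $H$; this is essentially the content behind Lemma \ref{nuclear norm}, and in particular any specific pair of purifications yields a lower bound on $F$.

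First, I would fix purifications $|\psi_i\rangle,|\phi_i\rangle\in H\otimes Z$ of $\rho_i$ and $\sigma_i$ respectively, chosen so that the overlap saturates the fidelity, $\langle\psi_i|\phi_i\rangle=F(\rho_i,\sigma_i)$ for $i=1,2$. By the standard freedom of multiplying a purification by a unitary on the auxiliary space (a single scalar phase on $|\phi_i\rangle$ suffices here), these inner products can be made real and nonnegative. Second, I would enlarge the ancilla by an extra qubit $\mathbb{C}^2$ with orthonormal basis $\{|0\rangle,|1\rangle\}$ and define the block purifications
\[
|\Psi\rangle := \sqrt{\lambda}\,|\psi_1\rangle|0\rangle + \sqrt{1-\lambda}\,|\psi_2\rangle|1\rangle,\qquad |\Phi\rangle := \sqrt{\lambda}\,|\phi_1\rangle|0\rangle + \sqrt{1-\lambda}\,|\phi_2\rangle|1\rangle.
\]
A direct partial-trace computation, in which the cross terms vanish because $\langle 0|1\rangle=0$, shows that $|\Psi\rangle$ and $|\Phi\rangle$ are purifications of $\lambda\rho_1+(1-\lambda)\rho_2$ and $\lambda\sigma_1+(1-\lambda)\sigma_2$, respectively.

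Third, the overlap computes as
\[
\langle\Psi|\Phi\rangle = \lambda\langle\psi_1|\phi_1\rangle + (1-\lambda)\langle\psi_2|\phi_2\rangle = \lambda F(\rho_1,\sigma_1)+(1-\lambda)F(\rho_2,\sigma_2),
\]
which is already a nonnegative real number by the previous normalization. Applying Uhlmann's theorem to the convex combinations then yields
\[
F\bigl(\lambda\rho_1+(1-\lambda)\rho_2,\,\lambda\sigma_1+(1-\lambda)\sigma_2\bigr) \geq |\langle\Psi|\Phi\rangle| = \lambda F(\rho_1,\sigma_1)+(1-\lambda)F(\rho_2,\sigma_2),
\]
which is the desired inequality.

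The only delicate point in this plan is the simultaneous phase alignment in the first step, namely ensuring that both $\langle\psi_1|\phi_1\rangle$ and $\langle\psi_2|\phi_2\rangle$ equal the corresponding fidelities as nonnegative reals so that no cancellation occurs in the overlap. This is handled independently for each index $i$ by absorbing a scalar phase into $|\phi_i\rangle$, and is really the only bookkeeping required. I do not anticipate any deeper obstacle; the argument is otherwise a routine application of Uhlmann's characterization combined with the ancilla-augmented purification trick.
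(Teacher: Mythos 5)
Your argument is correct and complete: the block purification $\sqrt{\lambda}\,|\psi_1\rangle|0\rangle+\sqrt{1-\lambda}\,|\psi_2\rangle|1\rangle$ does purify $\lambda\rho_1+(1-\lambda)\rho_2$ (the cross terms vanish under the partial trace), the overlap computation is right, and the phase-alignment issue you flag is genuine but handled correctly by absorbing a scalar phase into each $|\phi_i\rangle$ separately. The paper itself states Lemma \ref{14} without proof, treating it as the standard joint concavity of fidelity (available, e.g., in the cited Watrous reference), so there is no in-paper argument to compare against; your Uhlmann-theorem proof is exactly the textbook route one would supply, and in fact yields the slightly stronger ``strong concavity'' bound $\sqrt{\lambda\mu}\,F(\rho_1,\sigma_1)+\sqrt{(1-\lambda)(1-\mu)}\,F(\rho_2,\sigma_2)$ if the two mixing weights are allowed to differ.
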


\begin{lemma}\label{15}
	Let $|\varphi_1\rangle$, $|\varphi_2\rangle$ be non-distinguishable states from a $d-$dimensional quantum system $H$. Suppose that $\Phi$ is a mixed unitary quantum channel with Kraus decomposition $\Phi(\rho)=\sum_{k=1}^Np_kU_k\rho_i U_k^*$ which preserves the fidelity between $|\varphi_1\rangle$ and $|\varphi_2\rangle$. Then, for any local operation $\Phi_S$ of $\Phi$ under $K=\{U_k, 1\leq k\leq N\}$, $\Phi_S$ must preserve the fidelity between these states. (Proof in Appendix)
\end{lemma}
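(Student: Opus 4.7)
The plan is to exploit the joint concavity of fidelity given by Lemma \ref{14} together with the fact that the fidelity between $\rho_1,\rho_2$ cannot decrease under a mixed unitary channel. More precisely, I will first show that for any mixed unitary channel $\Psi(\rho)=\sum_k q_k V_k\rho V_k^*$, one has $F(\Psi(\rho_1),\Psi(\rho_2))\geq F(\rho_1,\rho_2)$. Since each $V_k$ is unitary, $F(V_k\rho_1V_k^*,V_k\rho_2V_k^*)=|\langle\varphi_1,\varphi_2\rangle|=F(\rho_1,\rho_2)$, and then applying Lemma \ref{14} (extended inductively to finite convex combinations) to the outputs gives
\begin{equation*}
F(\Psi(\rho_1),\Psi(\rho_2))\;\geq\;\sum_{k}q_k\,F(V_k\rho_1V_k^*,V_k\rho_2V_k^*)\;=\;F(\rho_1,\rho_2).
\end{equation*}

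Next, given an arbitrary local operation $\Phi_S$ with $S\subseteq\{1,\dots,N\}$, let $\lambda=\sum_{k\in S}p_k\in(0,1]$ and let $S^c$ denote the complement. Writing $\Phi_{S^c}$ in the analogous renormalized way, one gets the convex decomposition $\Phi=\lambda\Phi_S+(1-\lambda)\Phi_{S^c}$, so
\begin{equation*}
\Phi(\rho_i)=\lambda\,\Phi_S(\rho_i)+(1-\lambda)\,\Phi_{S^c}(\rho_i),\qquad i=1,2.
\end{equation*}
Applying Lemma \ref{14} to this decomposition yields
\begin{equation*}
F(\Phi(\rho_1),\Phi(\rho_2))\;\geq\;\lambda\,F(\Phi_S(\rho_1),\Phi_S(\rho_2))+(1-\lambda)\,F(\Phi_{S^c}(\rho_1),\Phi_{S^c}(\rho_2)).
\end{equation*}
Combining this with the monotonicity statement from the first paragraph applied to $\Phi_S$ and $\Phi_{S^c}$, together with the hypothesis $F(\Phi(\rho_1),\Phi(\rho_2))=F(\rho_1,\rho_2)$, gives
\begin{equation*}
F(\rho_1,\rho_2)\;\geq\;\lambda\,F(\Phi_S(\rho_1),\Phi_S(\rho_2))+(1-\lambda)\,F(\Phi_{S^c}(\rho_1),\Phi_{S^c}(\rho_2))\;\geq\;F(\rho_1,\rho_2).
\end{equation*}
Hence every inequality is an equality; since both terms in the middle sum are individually bounded below by $F(\rho_1,\rho_2)$ and sum (with convex weights) to exactly $F(\rho_1,\rho_2)$, each must equal $F(\rho_1,\rho_2)$. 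The equality $F(\Phi_S(\rho_1),\Phi_S(\rho_2))=F(\rho_1,\rho_2)=|\langle\varphi_1,\varphi_2\rangle|$ is precisely the claim.

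The main point to take care of is the monotonicity step: Lemma \ref{14} is stated only for two-term convex combinations, so one has to iterate it, or invoke its straightforward $N$-term extension, to handle $\Psi(\rho_i)=\sum_k q_kV_k\rho_iV_k^*$. This is routine, and the unitary invariance $F(V_k\rho_1V_k^*,V_k\rho_2V_k^*)=F(\rho_1,\rho_2)$ is immediate from either the trace-norm or the overlap definition of fidelity. The edge case $\lambda=1$ (i.e.\ $S$ equals the whole index set) is trivial, and one can assume $\lambda\in(0,1)$ throughout the argument, which is also the only case in which $\Phi_{S^c}$ needs to be defined.
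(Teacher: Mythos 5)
Your proposal is correct and follows essentially the same route as the paper: decompose $\Phi=\lambda\Phi_S+(1-\lambda)\Phi_{S^c}$, apply the joint concavity of fidelity (Lemma \ref{14}), use monotonicity of fidelity under the mixed unitary channels $\Phi_S$ and $\Phi_{S^c}$ to bound the convex combination below by $F(\rho_1,\rho_2)$, and then force equality term by term from the hypothesis. The only difference is that you spell out the monotonicity step (via unitary invariance plus an iterated application of Lemma \ref{14}) which the paper simply asserts; this is a welcome clarification but not a different argument.
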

\noindent Further, for local operations $\Phi_{S}$ of $\Phi$ and a shorter local operation $\Phi_{T}$ of $\Phi_{S}$, preserving fidelity between quantum states $|\varphi_1\rangle$ and $|\varphi_2\rangle$ by $\Phi_{S}$ indicates that $\Phi_{T}$ also preserves fidelity between the given states. Then, without loss of generality, for any mixed unitary quantum channel $\Phi(\rho)=\sum_{k=1}^Np_kU_k\rho_i U_k^*$, we always have 
\begin{equation}
	\Phi(\rho)=\sum_{i=1}^l p_{S_i}\Phi_{S_i}(\rho),
\end{equation} 
where $\Phi_{S_i}, 1\leq i\leq l$ are the length-2 local operations of $\Phi$ defined by different Kraus operators, and $l=\left\lfloor \frac{N}{2} \right\rfloor  +1$.\\

\noindent By using Lemma \ref{16} and Lemma \ref{17} below, we can observe from Theorem \ref{18} that the correlations between the Kraus operators for each length-2 local operations $\Phi_{S_i}$ are essential for the ability of the original quantum channel $\Phi$ to preserve fidelity. 
\begin{lemma}\label{16}
	Let M be a $2\times 2$ complex matrix given by $M=z_0I+z_1\sigma_x+z_2\sigma_y+z_3\sigma_z$ under Pauli basis $\{I, \sigma_x, \sigma_y, \sigma_z\}$ and $z_i\in \mathbb{C}$ be complex numbers that parameterize the matrix $M$. Then the singular values $\sigma_{1}$, $\sigma_{2}$ of $M$ are given by
	\begin{equation}
		\sigma_{1,2}=\sqrt{|z_0|^2+|z_1|^2+|z_2|^2+|z_3|^2\pm\sqrt{(|z_0|^2+|z_1|^2+|z_2|^2+|z_3|^2)^2-|z_0^2-z_1^2-z_2^2-z_3^2|^2}}.
	\end{equation}
\end{lemma}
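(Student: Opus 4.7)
The plan is to compute the singular values via the characteristic polynomial of $M^*M$, using only the trace and the determinant of $M$, since for a $2\times 2$ matrix these two invariants determine both eigenvalues.

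First I would write $M$ explicitly as a $2\times 2$ matrix in the standard basis:
\begin{equation*}
M = \begin{pmatrix} z_0+z_3 & z_1 - iz_2 \\ z_1+iz_2 & z_0-z_3 \end{pmatrix},
\end{equation*}
and compute $\det(M) = (z_0+z_3)(z_0-z_3) - (z_1-iz_2)(z_1+iz_2) = z_0^2 - z_1^2 - z_2^2 - z_3^2$. Since $M^*M$ is positive semidefinite with eigenvalues $\sigma_1^2, \sigma_2^2$, we have $\det(M^*M) = |\det(M)|^2 = |z_0^2 - z_1^2 - z_2^2 - z_3^2|^2$.

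Next I would compute the trace of $M^*M$. Using the fact that the Pauli matrices $\{I,\sigma_x,\sigma_y,\sigma_z\}$ are mutually orthogonal with respect to the Hilbert--Schmidt inner product (each $\sigma_i$ is Hermitian, traceless, and squares to $I$), the cross terms in $\text{tr}(M^*M)$ vanish and we obtain $\text{tr}(M^*M) = 2(|z_0|^2 + |z_1|^2 + |z_2|^2 + |z_3|^2)$. Equivalently, one could read this directly from the explicit matrix for $M$ by computing the sum of squared moduli of its four entries.

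Finally, the eigenvalues $\sigma_1^2$ and $\sigma_2^2$ of the $2\times 2$ positive matrix $M^*M$ are the roots of
\begin{equation*}
\lambda^2 - \bigl(\sigma_1^2+\sigma_2^2\bigr)\lambda + \sigma_1^2\sigma_2^2 = 0,
\end{equation*}
so by the quadratic formula
\begin{equation*}
\sigma_{1,2}^2 = \bigl(|z_0|^2+|z_1|^2+|z_2|^2+|z_3|^2\bigr) \pm \sqrt{\bigl(|z_0|^2+|z_1|^2+|z_2|^2+|z_3|^2\bigr)^2 - |z_0^2-z_1^2-z_2^2-z_3^2|^2},
\end{equation*}
and taking positive square roots yields the claimed expression. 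There is no serious obstacle here: the proof is a direct computation, and the only step that requires a moment of care is verifying that the cross terms in $\text{tr}(M^*M)$ cancel, which follows cleanly from the orthogonality of the Pauli basis.
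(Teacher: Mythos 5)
Your proof is correct and complete: the paper states Lemma \ref{16} without proof, and your computation of $\operatorname{tr}(M^*M)=2\sum_i|z_i|^2$ (via orthogonality of the Pauli basis) and $\det(M^*M)=|z_0^2-z_1^2-z_2^2-z_3^2|^2$, followed by the quadratic formula for the eigenvalues of $M^*M$, is exactly the standard argument the authors evidently have in mind. No gaps.
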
\
\\
\\
\noindent For quantum states $|\varphi_1\rangle$, $|\varphi_2\rangle\in H$ and $U\in U(H)$, we define the \textit{relativity} $R_U(|\varphi_1\rangle, |\varphi_2\rangle)$ between $|\varphi_1\rangle$, $|\varphi_2\rangle$ under $U$ as 
\begin{equation}
	R_U(|\varphi_1\rangle, |\varphi_2\rangle)=\frac{\langle \varphi_1 U\varphi_2\rangle}{\langle \varphi_1, \varphi_2\rangle}.
\end{equation}
Further, we say that $|\varphi_1\rangle$, $|\varphi_2\rangle$ are \textit{symmetric} under $U$ (or $U-$symmetric) if $R_U(|\varphi_1\rangle, |\varphi_2\rangle)=R_U(|\varphi_2\rangle, |\varphi_1\rangle)$.  
\begin{lemma}\label{17}
	Let $c, d$ be complex numbers. The following are equivalent:\\
	(i) $c, d$ are unit complex numbers and $c=d^*$.\\
	(ii) $|c|^2+|d|^2\leq 2$ and $2|c-d^*|=||c|^2-|d|^2|$.
\end{lemma}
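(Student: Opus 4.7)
The plan is to establish the equivalence by proving the two implications separately. For $(\mathrm{i})\Rightarrow(\mathrm{ii})$ a direct substitution will suffice: if $|c|=|d|=1$ and $c=\bar d$, then $c-d^*=0$ and $|c|^2-|d|^2=0$, so both sides of the key equation vanish while $|c|^2+|d|^2=2\leq 2$.

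For the substantive direction $(\mathrm{ii})\Rightarrow(\mathrm{i})$, I would first square the equality $2|c-d^*|=||c|^2-|d|^2|$ and use the polarization identity $|c-d^*|^2=|c|^2+|d|^2-2\mathrm{Re}(cd)$ to rewrite it, with $a=|c|$ and $b=|d|$, as
\[
\mathrm{Re}(cd) \;=\; \frac{a^2+b^2}{2} \;-\; \frac{(a-b)^2(a+b)^2}{8}.
\]
The next step is to impose the standard bound $\mathrm{Re}(cd)\leq|cd|=ab$. After clearing denominators and factoring, the resulting inequality reduces to $(a-b)^2\bigl[(a+b)^2-4\bigr]\geq 0$, which forces the dichotomy $a=b$ or $a+b\geq 2$.

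The final step combines this dichotomy with the hypothesis $a^2+b^2\leq 2$ together with the AM--QM bound $(a+b)^2\leq 2(a^2+b^2)\leq 4$, which gives $a+b\leq 2$. Saturating all the inequalities in the chain then pins down $a=b=1$. Feeding $a=b=1$ back into the displayed formula yields $\mathrm{Re}(cd)=1=|cd|$, so $cd$ is a non-negative real of modulus one; hence $cd=1$ and $d=\bar c$, giving $c=d^*$ with $|c|=|d|=1$.

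The main obstacle I anticipate is the boundary bookkeeping in the concluding step: one must check that the two hypotheses in (ii) (the quadratic inequality on $|c|^2+|d|^2$ and the exact equality $2|c-d^*|=||c|^2-|d|^2|$) together force $|c|=|d|=1$ rather than allowing a smaller common modulus. My plan is to handle this by tracking the equality cases in AM--QM alongside the squared identity, verifying that any violation of $a=1$ contradicts either the dichotomy or the hypothesis $a^2+b^2\leq 2$.
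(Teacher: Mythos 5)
Your implication (i)$\Rightarrow$(ii) is fine and coincides with the paper's. For (ii)$\Rightarrow$(i) your route (squaring, the polarization identity, and the bound $\mathrm{Re}(cd)\le |cd|$) differs in its intermediate steps from the paper's, which plays the chain $2|c-d^*|=\bigl(|c|+|d^*|\bigr)\,\bigl||c|-|d^*|\bigr|\le 2\bigl||c|-|d^*|\bigr|$ off against the reverse triangle inequality; but both arguments funnel into the same dichotomy --- either $|c|+|d|\ge 2$ (whence $|c|=|d|=1$ by your AM--QM step) or $|c|=|d|$ --- and both break at the same place: in the branch $|c|=|d|$ nothing forces the common modulus to equal $1$. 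You flag this as ``boundary bookkeeping'' to be settled by tracking equality cases, but no bookkeeping can close it, because the implication (ii)$\Rightarrow$(i) is false as stated: take $c=d=\tfrac12$. Then $c-d^*=0$ and $|c|^2-|d|^2=0$, so $2|c-d^*|=\bigl||c|^2-|d|^2\bigr|$ holds, and $|c|^2+|d|^2=\tfrac12\le 2$, yet $c$ is not a unit complex number. (The paper's own proof makes the same unjustified leap: from $|c-d^*|=\bigl||c|-|d^*|\bigr|$ and $|c|+|d|\le 2$ it asserts $|c|=|d^*|=1$, which only follows when $|c|\ne|d|$.)

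What your computation actually establishes is the corrected statement: condition (ii) is equivalent to ``$c=d^*$ and $|c|\le 1$.'' Indeed, in the branch $|c|=|d|$ the right-hand side of your squared identity vanishes, so $|c-d^*|=0$, i.e.\ $c=d^*$, with $|c|\le 1$ from the first hypothesis; in the branch $|c|+|d|\ge 2$ your argument does yield $|c|=|d|=1$, and then $\mathrm{Re}(cd)=1=|cd|$ forces $cd=1$ and hence $c=d^*$. So to salvage the lemma one must either add a hypothesis excluding $|c|=|d|<1$ or weaken conclusion (i) accordingly; note that this weakening propagates to the characterization claimed in Theorem \ref{18}, where the lemma is applied.
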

\noindent We then present our main results:
\begin{theorem}\label{18}
	Let $\Phi$ be a mixed unitary quantum channel in $d-$dimensional quantum system $H$ with the kraus representation $\Phi(\rho)=tU_1\rho U_1^*+(1-t)U_2\rho U_2^*$ for $U_1, U_2\in U(H)$, $0\leq t\leq 1$ and $\rho\in D(H)$. For non-distinguishable pure states $|\varphi_1\rangle$, $|\varphi_2\rangle$, $\Phi$ preserves fidelity between $|\varphi_1\rangle$ and $|\varphi_2\rangle$ if and only if $|\varphi_1\rangle$, $|\varphi_2\rangle$ are symmetric under $U=U_1^*U_2$ and $|R_U(|\varphi_1\rangle, |\varphi_2\rangle)|=1$. 
\end{theorem}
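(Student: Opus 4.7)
The plan is to reduce Theorem \ref{18} to an algebraic condition on the $2\times 2$ matrix $M := \mathrm{Tr}_H(|\Psi_2\rangle\langle\Psi_1|)$ and then to invoke Lemmas \ref{16} and \ref{17}. First, by Lemma \ref{13}, preserving fidelity between $|\varphi_1\rangle$ and $|\varphi_2\rangle$ is equivalent to $||M||_1 = |\langle\varphi_1|\varphi_2\rangle|$. Using the purifications $|\Psi_i\rangle = \sqrt{t}\,U_1|\varphi_i\rangle\otimes|1\rangle + \sqrt{1-t}\,U_2|\varphi_i\rangle\otimes|2\rangle$ and tracing out $H$, one obtains
\[
M \;=\; \alpha\begin{pmatrix} t & \sqrt{t(1-t)}\,c \\ \sqrt{t(1-t)}\,\overline{d} & 1-t\end{pmatrix},
\]
where $\alpha = \langle\varphi_1|\varphi_2\rangle$, $c = R_U(|\varphi_1\rangle,|\varphi_2\rangle)$ and $d = R_U(|\varphi_2\rangle,|\varphi_1\rangle)$; the conjugate on the $d$-entry arises from $U_2^*U_1 = U^*$ together with the identity $\langle\varphi_1|U^*|\varphi_2\rangle = \overline{\langle\varphi_2|U|\varphi_1\rangle}$.

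Second, I would decompose $M$ in the Pauli basis and apply Lemma \ref{16}, or equivalently compute $\sigma_1^2+\sigma_2^2 = \mathrm{Tr}(M^*M)$ and $\sigma_1\sigma_2 = |\det M|$ directly, to arrive at
\[
||M||_1^2 \;=\; |\alpha|^2\Bigl[\,1 + t(1-t)\bigl(|c|^2 + |d|^2 - 2\bigr) + 2t(1-t)\,\bigl|1 - c\overline{d}\bigr|\,\Bigr].
\]
Equating this with $|\alpha|^2$ and dividing by the factor $t(1-t)$ (the endpoints $t \in \{0,1\}$ reduce $\Phi$ to a single unitary, which trivially preserves fidelity) collapses the problem to the single scalar equation $2|1 - c\overline{d}| = 2 - |c|^2 - |d|^2$.

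Third, I would massage this condition into the form handled by Lemma \ref{17}. Squaring both sides and using $|c-d|^2 = |c|^2 + |d|^2 - 2\,\mathrm{Re}(c\overline{d})$, the equation simplifies to the pair $\bigl||c|^2 - |d|^2\bigr| = 2|c - d|$ and $|c|^2 + |d|^2 \leq 2$. Applying Lemma \ref{17} after the relabeling $d \mapsto d^*$ (legitimate since the lemma is stated for arbitrary complex numbers) yields $|c| = |d| = 1$ and $c = d$; that is, $|\varphi_1\rangle$ and $|\varphi_2\rangle$ are symmetric under $U$ with $|R_U(|\varphi_1\rangle,|\varphi_2\rangle)| = 1$. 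The converse is a direct substitution: with $c = d$ and $|c| = 1$ the displayed expression for $||M||_1^2$ reduces to $|\alpha|^2$.

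The main obstacle will be the bookkeeping in the first two steps---tracking conjugates carefully so that the off-diagonal entries of $M$ read $c$ and $\overline{d}$ (not $c$ and $d$), and then executing the singular-value calculation without arithmetic slips. The key algebraic step is the identity $(|c|^2 - |d|^2)^2 = 4|c-d|^2$ which emerges after squaring the scalar equation; it is exactly this identity that converts the awkward $|1 - c\overline{d}|$-term into the shape that Lemma \ref{17} recognizes.
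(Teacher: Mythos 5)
Your proposal is correct and follows essentially the same route as the paper's own proof: reduce via Lemma \ref{13} to the condition $\|\mathrm{Tr}_H(|\Psi_2\rangle\langle\Psi_1|)\|_1=|\langle\varphi_1|\varphi_2\rangle|$, compute the $2\times 2$ matrix whose off-diagonal entries carry the relativities, extract the singular values (your $\mathrm{Tr}(M^*M)$ and $|\det M|$ computation is equivalent to the paper's Pauli-basis application of Lemma \ref{16}), and land on the same scalar equation $2|1-c\overline{d}|=2-|c|^2-|d|^2$, which Lemma \ref{17} resolves exactly as you describe (the paper works with $b_0=R_{U^*}(|\varphi_1\rangle,|\varphi_2\rangle)=\overline{d}$ in place of your $d$, which is only a notational difference). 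Your explicit exclusion of the degenerate endpoints $t\in\{0,1\}$ before dividing by $t(1-t)$ is a small point of care the paper omits.
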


\noindent For details on lemma \ref{17} and theorem \ref{18}, see the Appendix. Based on the restriction from Theorem \ref{18}, we can now construct quantum channels with Choi's rank greater than 2 that preserve the fidelity between non-distinguishable states. Please see the following example. 
\begin{example}\label{3.1}
	Let $H$ be a qubit system and $|\varphi_1\rangle=\begin{pmatrix}
		1\\ 0
	\end{pmatrix}$, $|\varphi_2\rangle=\begin{pmatrix}
		a\\ b
	\end{pmatrix}$ be non-distinguishable pure states with $|a|^2+|b|^2=1$. And $\Phi$ is the the mixed unitary quantum channel given by
	\begin{equation}
		\Phi(\rho)=\frac{1}{3}U_1\rho U_1^*+\frac{1}{3}U_2\rho U_2^*+\frac{1}{3}U_3\rho U_3^*,
	\end{equation}
where $U_1=\begin{pmatrix} 1&0\\ 0&1\end{pmatrix}$, $U_2=\begin{pmatrix} 1&0\\ 0&i\end{pmatrix}$, $U_3=\begin{pmatrix} 1&0\\ 0&-1\end{pmatrix}$.\\
It is not difficult to verify that $\Phi$ preserves the non-distinguishable states $|\varphi_1\rangle,|\varphi_2\rangle$.\\
\\ This example shows that fidelity preservation for non-distinguishable states can occur for mixed unitary channels with Choi rank greater than $2$, but only for specially aligned pairs. It should therefore be viewed as a concrete construction illustrating the restriction from Theorem 18, rather than as evidence that such preservation is generic.
\end{example}

\section{General Phase Damping Channels}

\noindent In Section 3, we observed that preservation of distinguishability is closely tied to diagonal structure in the relative products $U_j^*U_i$. In Section 4, we saw that channels of this type can also preserve the fidelity of certain non-distinguishable pure-state pairs. We now study \emph{general phase damping} (GPD) channels as a representative case in which the equality problem for fidelity can be examined transparently.\\
\\
More specifically, if a mixed unitary channel has the property that the relative products $U_1^*U_i$ are diagonal, then, up to an overall unitary conjugation, the relevant behavior is governed by a channel with diagonal unitary Kraus operators. This motivates the following definition of a general phase damping channel.\\
\\
We emphasize that the goal of this section is not to develop a general theory of decoherence, but rather to understand when equality can occur in the fidelity monotonicity relation within a natural class of channels whose generic action is to contract off-diagonal terms.
\begin{definition}
	Let $H$ be an $n-$dimensional quantum system. We say $\Phi$ is a general phase damping channel (GPD) over $H$ if there exists a Kraus representation of $\Phi$ such that 
 \begin{equation}
	\Phi(\rho)=p_1\rho+\sum_{i=2}^N p_iD_i\rho D_i^*
 \end{equation}
where $D_i=diag\{1, e^{i\theta_i}\}, 2\leq i\leq N $, for any $\rho\in D(H)$ and some relative phases $\theta_i$ and $p_i$ satisfying $\sum_{i=1}^N p_i=1$.
\end{definition}
\noindent We now consider general phase damping channels (GPD) as a case study that connects the two themes of this study: exact fidelity preservation and decoherence. Recall that fidelity is monotone under quantum channels. Therefore, for generic pairs of states, one expects.
\begin{equation}
	F(\Phi(\rho_1), \Phi(\rho_2))\geq F(\rho_1, \rho_2),
\end{equation}
The central question is to characterize when the equality case holds. In a GPD channel, the diagonal structure of the Kraus operators makes the mechanism transparent: the channel primarily contracts off-diagonal terms, that is, it introduces decoherence. The goal of this section is twofold: (i) to record structural constraints on families of pure states whose pairwise fidelities are exactly preserved, and (ii) to visualize how decoherence generically drives the inequality to be strict, thereby explaining why the preserved families must be small.\\
\\
 Let $p=\prod_{i=0}^Np_i$ denote the mixing parameter among the unitary operations $\{U_0=I, U_1, \dots U_N\}$ of $\Phi$ and $N$ the general dephasing rank. It is evident that $0\le p\le (\frac{1}{N})^N$. Let $E_d=\{|0\rangle, \dots, |d-1\rangle\}$ be the mutually distinguishable set of the $d-$dimensional quantum system. Then the following result is obtained directly from Theorem \ref{2d}.\\
\begin{corollary}\label{19}
	Let $H$ be a $d-$dimensional quantum system and $\Phi$ a quantum channel over $H$. Then $E_d$ is distinguishable under $\Phi$ if and only if $\Phi$ is unitary equivalent to a general phase damping channel.
\end{corollary}

\noindent Example \ref{3.1} demonstrates that such a quantum channel also preserves the fidelity between non-distinguishable pure states. A natural question arises: what is the largest set \( S \) of pure states that is preserved by $ \Phi$? For the case of a qubit system $ H $, we have the following result:
\begin{theorem}\label{20}
	Let $H$ be a qubit system and $\Phi$ be a general phase damping channel over $H$. If a set of pure states $S$ is preserved by $\Phi$, then $ |S| \leq 3 $. Moreover, $ |S| = 3 $ if and only if $ E_2 \subset S $.\end{theorem}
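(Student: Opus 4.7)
The plan is to reduce the statement to local pair-characterizations and then combine them with a simple combinatorial argument. Identify pure states with rays $a|0\rangle + b|1\rangle \in H$ (global phases are irrelevant). Since each $D_i$ is diagonal in $E_2$, the states $|0\rangle\langle 0|$ and $|1\rangle\langle 1|$ are fixed points of $\Phi$; moreover, for any $|\varphi\rangle = a|0\rangle+b|1\rangle$, a direct computation shows that $\Phi(|\varphi\rangle\langle\varphi|)$ has the same diagonal $\mathrm{diag}(|a|^2,|b|^2)$ as $|\varphi\rangle\langle\varphi|$. This observation will drive both the obstruction and the construction.

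First I would classify the \emph{distinguishable} pairs preserved by $\Phi$. By Theorem \ref{2d}, such a pair $\{|\varphi_1\rangle,|\varphi_2\rangle\}$ forces $D_j^*D_i$ to be diagonal in that basis for all $i,j$. Assuming $\Phi$ is a genuine (non-identity) general phase damping channel, some $\theta_j-\theta_i \not\equiv 0\pmod{2\pi}$, so the corresponding $D_j^*D_i = \mathrm{diag}(1,e^{i(\theta_j-\theta_i)})$ has two distinct eigenvalues with one-dimensional eigenspaces spanned by $|0\rangle$ and $|1\rangle$. Hence $\{|\varphi_1\rangle,|\varphi_2\rangle\} = E_2$ up to phase, so $E_2$ is the unique preserved distinguishable pair.

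Second I would characterize the \emph{non-distinguishable} pairs preserved by $\Phi$, which is the main technical step. By Lemma \ref{15} combined with the remark after Theorem \ref{18}, the length-$2$ sub-channels obtained from any two Kraus operators must already preserve fidelity between $|\varphi_1\rangle,|\varphi_2\rangle$; hence for every relevant pair $(i,j)$, the states must be $D_i^*D_j$-symmetric and satisfy $|R_{D_i^*D_j}(|\varphi_1\rangle,|\varphi_2\rangle)| = 1$. Writing $|\varphi_k\rangle = a_k|0\rangle+b_k|1\rangle$ and setting $z=\overline{a_1}a_2$, $w=\overline{b_1}b_2$ with diagonal $D = \mathrm{diag}(1,e^{i\theta})$, I would cross-multiply the symmetry identity to obtain $(z\overline{w}-\overline{z}w)(1-e^{i\theta})=0$, forcing $z\overline{w}\in\mathbb{R}$; then the $|R_D|=1$ condition simplifies to $\overline{z}w(1-\cos\theta)=0$. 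Picking $(i,j)$ with $\theta \not\equiv 0$ yields $z=0$ or $w=0$, which means that at least one of $|\varphi_1\rangle,|\varphi_2\rangle$ must equal $|0\rangle$ or $|1\rangle$ up to phase. This is the step I expect to be the main obstacle, since it requires carefully teasing apart the two conditions from Theorem \ref{18} and handling the boundary case where $a_1^*a_2 + b_1^*b_2 = 0$ separately (in which case the pair is distinguishable and Step 1 already applies).

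Finally I would combine the two classifications. For any preserved set $S$, every pair in $S$ is either distinguishable (hence $\{|0\rangle,|1\rangle\}$) or non-distinguishable (hence includes a basis state). Therefore every element of $S$ lies in $E_2 \cup \{|\varphi\rangle\}$ for a single non-basis $|\varphi\rangle$, because two distinct non-basis states could form neither type of preserved pair. This gives $|S|\leq 3$, and when $|S|=3$ at least two elements must lie in $E_2$, so $E_2\subset S$. For the converse direction of the \emph{iff}, I would simply verify that $E_2 \cup \{|\varphi\rangle\}$ is preserved for any non-basis $|\varphi\rangle$: distinguishability of $|0\rangle$ and $|1\rangle$ is preserved since both are fixed by $\Phi$, and the fidelities $F(|0\rangle\langle 0|,\Phi(|\varphi\rangle\langle\varphi|)) = \sqrt{\langle 0|\Phi(|\varphi\rangle\langle\varphi|)|0\rangle} = |a| = F(|0\rangle\langle 0|,|\varphi\rangle\langle\varphi|)$ (and likewise for $|1\rangle$) are preserved because $\Phi$ leaves the diagonal of $|\varphi\rangle\langle\varphi|$ invariant. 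This exhibits a realizable $|S|=3$ and completes the proof.
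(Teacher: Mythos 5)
Your proposal is correct and follows essentially the same route as the paper: reduce to length-$2$ local operations via Lemma \ref{15}, apply the two conditions of Theorem \ref{18} to rule out any preserved pair of genuine superpositions, and verify directly that $E_2\cup\{|\varphi\rangle\}$ is preserved because $\Phi$ fixes the diagonal in the $E_2$ basis. Your parametrization $z=\overline{a_1}a_2$, $w=\overline{b_1}b_2$ makes the symmetry and $|R_D|=1$ computations cleaner than the paper's polar-coordinate version, and your eigenspace argument for why the only preserved distinguishable pair is $E_2$ is slightly more explicit, but these are refinements rather than a different approach.
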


\noindent This result shows that even within this highly structured channel class, exact fidelity preservation cannot hold on a large generic family of pure states. Thus pairwise fidelity preservation remains a rigid phenomenon, even in a setting where the form of decoherence is especially simple (See the proof of Theorem \ref{20} in the Appendix). One of the reasons why we name the quantum channel $\Phi$ in Section 4.1 as a general phase damping channel is because the classical phase damping channel is indeed a special case of $\Phi$ with $n=2, \theta_2=\pi$ and its phase damping parameter $\lambda$ can be defined by probabilities $\{p_1, p_2\}$ (See the end of Appendix)\\

\noindent We now discuss why preserving fidelity between an arbitrary pair of quantum states is generally difficult. We begin with a phase damping channel. $\Phi_P$ is referred to as a phase damping channel in a qubit system with dephasing parameter $0\leq \lambda \leq 1$ if for any $\rho\in D(H)$, $\Phi_{P}(\rho)=K_1\rho K_1^*+K_2\rho K_2^*$, where
\begin{equation}
		K_1=\begin{pmatrix}{}
  1&0 \\
  0&\sqrt{1-\lambda} 
\end{pmatrix}, 
K_2=\begin{pmatrix}{}
  0&0 \\
  0&\sqrt{\lambda} 
\end{pmatrix}.
	\end{equation}
Similarly (See Appendix), $\Phi_P(\rho)=p_1\rho+(1-p_1)D\rho D^*$ with $D=\diag\{1, e^{i\theta}\}$ for relative phase $\theta$ and mixing parameter $p=p_1(1-p_1)$.\\ 
\\
For any density operator 
\begin{equation}
	\rho=\begin{pmatrix}{}
  a_{11}&a_{12} \\
  a_{12}^*&a_{22} 
\end{pmatrix},
\end{equation}
we have
\begin{equation}
	\Phi_P(\rho)=\begin{pmatrix}{}
  a_{11}&p_1a_{12}+(1-p_1)a_{12}e^{-i\theta} \\
  p_1a_{12}^*+(1-p_1)a_{12}^*e^{i\theta}&a_{22} 
\end{pmatrix}.
\end{equation}
This introduces decoherence of the input states because $|p_1a_{12}+(1-p_1)a_{12}e^{-i\theta}|\leq |a_{12}|$. Because fidelity is monotone under CPTP maps, decoherence cannot decrease fidelity. For generic pairs, this typically makes the inequality strict unless the pair lies in an equality case.\\

\noindent The decoherence phenomenon introduced by the phase-damping channel with respect to the mixing parameter $p$ is shown in Figure 5.1, by taking the maximally coherent state $\rho_m$ as the input, where
\begin{equation}
	\rho_m=\frac{1}{2}\begin{pmatrix}{}
  1&1 \\
  1&1 
\end{pmatrix}.
\end{equation}
\begin{figure}[ht]
  \centering
  \includegraphics[width=0.5\textwidth]{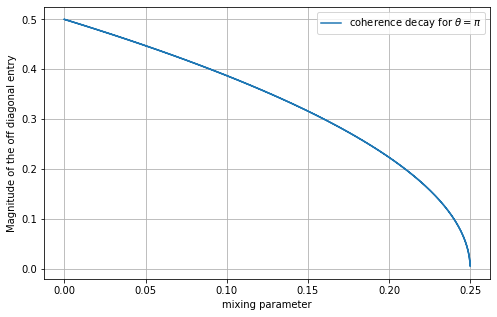}
  \caption{Coherence decay for Phase damping channel}
  \label{fig:my_label}
\end{figure}
\\
\\
For a rank 2 general phase damping channel $\Phi(\rho)=p_1\rho+(1-p_1)D\rho D^*$ with $D=\diag\{1, e^{i\theta}\}$, relative phase $\theta$ and mixing parameter $p=p_1(1-p_1)$, the decoherence process can be shown in a very similar manner. See Figure 5.2.
\\
\begin{figure}[t]
  \centering
  \includegraphics[width=0.5\textwidth]{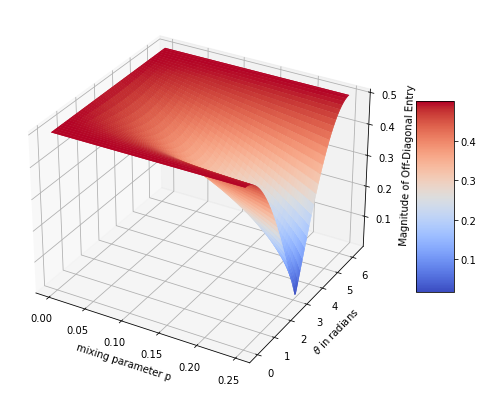}
  \caption{Coherence decay under a rank 2 general phase damping error with relative phase $\theta$ and mix parameter p}
  \label{fig:my_label}
\end{figure}

\noindent We can also visualize the decoherence for the rank 3 general phase damping channel over a qubit system defined by $\Phi(\rho)=(1-p_2-p_3)\rho+p_2D_2\rho D_2^*+p_3D_3\rho D_3^*$ with diagonal matrices $D_2=\diag\{1, e^{i\theta_2}\}$,  $D_3=\diag\{1, e^{i\theta_3}\}$ for $\theta_2, \theta_3\in \mathcal{R}$, probabilities $p=\{1-p_2-p_3, p_2, p_3\}$ and the mixing parameter $p=(1-p_2-p_3)p_2p_3$. See Figures 5.3, 5.4 below.\\
\begin{figure}[ht]
  \centering
  \begin{subfigure}{0.5\textwidth}
    \centering
    \includegraphics[width=\linewidth]{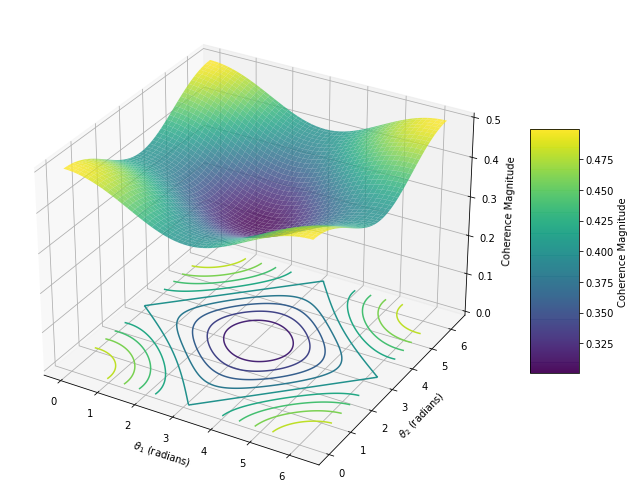}
    \caption{Decoherence in rank-3 GPD with \\probabilities $p=\{0.8, 0.1, 0.1\}$}
    \label{fig:sub1a}
  \end{subfigure}\hfill
  \begin{subfigure}{0.5\textwidth}
    \centering
    \includegraphics[width=\linewidth]{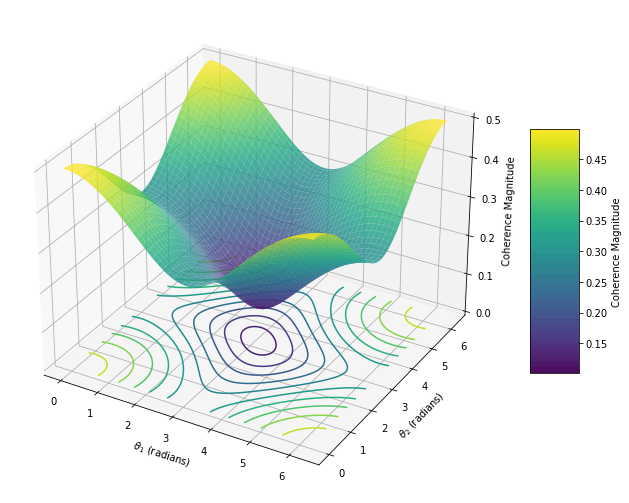}
    \caption{Decoherence in rank-3 GPD with \\probabilities $p=\{0.6, 0.2, 0.2\}$}
    \label{fig:sub2a}
  \end{subfigure}
  \caption{}
  \label{fig:test}
\end{figure}
\\
\begin{figure}[ht]
  \centering
  \begin{subfigure}{0.5\textwidth}
    \centering
    \includegraphics[width=\linewidth]{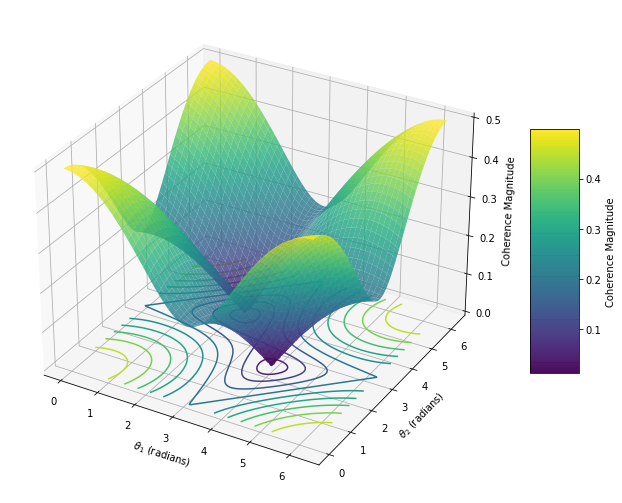}
    \caption{Decoherence in rank-3 GPD with \\probabilities $p=\{0.4, 0.3, 0.3\}$}
    \label{fig:sub1b}
  \end{subfigure}\hfill
  \begin{subfigure}{0.5\textwidth}
    \centering
    \includegraphics[width=\linewidth]{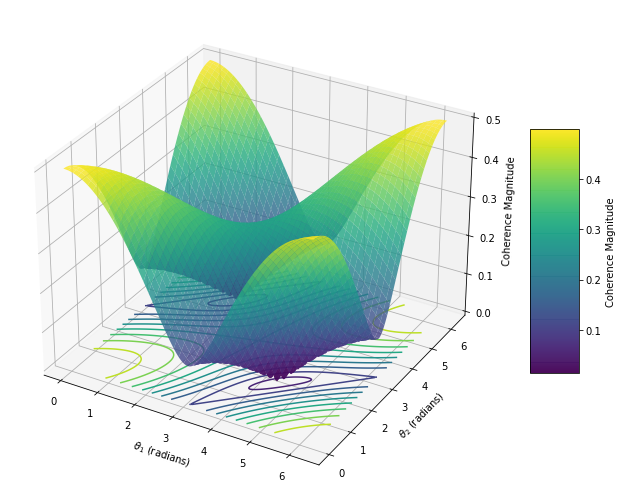}
    \caption{Decoherence in rank-3 GPD with \\probabilities $p=\{0.2, 0.4, 0.4\}$}
    \label{fig:sub2b}
  \end{subfigure}
  \caption{}
  \label{fig:test}
\end{figure}
The GPD case therefore provides a concrete setting in which the difference between generic decoherence and exact fidelity preservation becomes especially visible: while off-diagonal damping is typical, equality in fidelity monotonicity survives only under strong structural restrictions.
\clearpage
\section{Summary and Outlook}

\noindent In this paper, we studied exact preservation of the Uhlmann fidelity for selected pairs of pure states under mixed unitary quantum channels. Our focus was on the equality
\[
F(\rho_1,\rho_2)=F(\Phi(\rho_1),\Phi(\rho_2)),
\]
viewed as a state-dependent preservation problem under the channel itself. This notion is weaker than full state protection through quantum error correction, decoherence-free subspaces, or noiseless subsystems, but it is still sufficiently rigid to impose strong structural constraints.\\
\\
For distinguishable pure-state pairs, we characterized when orthogonality, equivalently zero fidelity, is preserved under mixed unitary channels. In the qubit case, this yielded an exact criterion in terms of the diagonal structure of the relative products $U_j^*U_i$, and we extended this perspective to several higher-dimensional settings. For non-distinguishable pure-state pairs, we studied equality in the monotonicity of fidelity and obtained a preservation criterion for two-unitary mixtures in terms of a symmetry relation involving the relative unitary $U_1^*U_2$.\\
\\
We further analyzed general phase damping channels as a concrete case study in which exact fidelity preservation can be compared with generic decoherence. In this class, off-diagonal contraction typically forces strict monotonicity, so equality can occur only for restricted families of states. In particular, this shows that pairwise fidelity preservation may survive in structured noisy dynamics even when full noiseless behavior is absent.\\
\\
There are several directions for future work. One is to investigate whether similar preservation criteria can be obtained for broader classes of quantum channels beyond the mixed unitary setting. Another is to study preservation of other quantum properties within the same framework of quantum property preservation. It would also be interesting to understand how the structure of preserved state families interacts with channel invariants and with other notions of partial information protection.\newpage

\section{Appendix}
\begin{proof}[Proof of Theorem \ref{2d}:]

	Let $\rho_1, \rho_2\in D(H)$ be the associated density operators of $|\varphi_1\rangle, |\varphi_2\rangle$, respectively.\\
	 We first show (i) $\Rightarrow$ (ii). 
	Suppose that $\Phi(\rho_1)$ and $\Phi(\rho_2)$ are still distinguishable under quantum channel $\Phi$. That indicates
	\begin{equation}
		F\left(\Phi(\rho_1), \Phi(\rho_2)\right)=0.
	\end{equation}
Let $\Psi$ be the complementary channel of $\Phi$ from $D(H)$ to $D(H)$. For each $\rho\in D(H)$, we have
	\begin{equation}
		\Psi(\rho)=\sum_{i, j=1}^N\sqrt{p_ip_j}\langle U_j^*U_i, \rho \rangle E_{ij}.
	\end{equation}

\noindent Without loss of generality, we take the initial distinguishable states $|\varphi_1\rangle=|0\rangle, |\varphi_2\rangle=|1\rangle$ from the Qubit system $H$ with associated density operators $\rho_1=|0\rangle\langle0|$, $\rho_2=|1\rangle\langle 1|$ (See remark 1). By Lemma \ref{nuclear norm}, we have
\begin{equation}
	||\Psi(|1\rangle \langle0|)||_1=F(\Phi(\rho_1), \Phi(\rho_2))=0, 
\end{equation}
which indicates $\Psi(|1\rangle\langle0|)=0$ and then 
\begin{equation}\label{2.10}
	\langle U_j^*U_i, |1\rangle\langle0|\rangle=\langle U_j^*U_i, E_{12}\rangle=0
\end{equation}
 for any $1\leq i, j\leq N$.\\
Clearly, $U_j^*U_i$ are still unitary matrices in $B(H)$. Thus, for each choice of $(i, j)$, we can write 
\begin{equation}
	U_j^*U_i=
	\begin{pmatrix}{}
  a&b \\
  -e^{i\theta}\bar{b}& e^{i\theta}\bar{a}
\end{pmatrix}
\end{equation} 
for some $\theta\in\mathcal{R}$ and $a, b$ complex numbers satisfying $|a|^2+|b|^2=1$. Equation \ref{2.10} implies that $b=0$. Thus $U_j^*U_i$ is generally diagonal for some $\theta_a$ and $\theta_b$. i.e.
\begin{equation}
	U_j^*U_i=
\begin{pmatrix}{}
  e^{i\theta_a}&0 \\
  0& e^{i\theta_b}
\end{pmatrix}.
\end{equation}
\\
\noindent Next, we show (ii)$\Rightarrow$(i). Suppose that Kraus operators $\{U_i\}_{1\leq i\leq N}$ of $\Phi$ satisfy
\begin{equation}
	U_j^*U_i= \begin{pmatrix}{}
  e^{i\theta_1^{ij}}&0 \\
  0& e^{i\theta_2^{ij}}.
\end{pmatrix}
\end{equation} 
for some $\{\theta_1^{ij}, \theta_2^{ij}\}$. Then it is easy to verify that
\begin{equation}
	\begin{aligned}
		\langle \Phi(\rho_1), \Phi(\rho_2)\rangle=0=\langle \rho_1, \rho_2\rangle,
	\end{aligned}
\end{equation}
  which shows that the quantum channel $\Phi$ preserves the distinguishable states.\\
\noindent (iii) $\Rightarrow$ (ii) is clear if we take $A_i=\sqrt{p_i}U_i$.\\
(ii) $\Rightarrow$ (iii) is guaranteed by Theorem \ref{Choi-thm2}. Let $\Phi(\rho)=\sum_{i=1}^MA_i\rho A_i^*$ be any other Kraus representation of $\Phi$. Theorem \ref{Choi-thm2} shows that  there exists an isometry $U_{M\times N}$ between $(\sqrt{p_1}N_1,\dots \sqrt{p_N}N_N)^T$ and $(A_1,\dots A_M)^T$ such that for any $1\leq i\leq M$,
\begin{equation}
	A_i=\sum_{s=1}^N u_{is}\sqrt{p_s}U_s.
\end{equation}
Thus $A_j^*A_i$ are still diagonal matrices for any $1\leq i, j\leq M$ since
\begin{equation}
	A_j^*A_i=\sum_{s, t=1}^N u_{is}u_{jt}\sqrt{p_s}\sqrt{p_t}U_t^*U_s,
\end{equation}
which completes the proof. 

\end{proof}

\begin{proof}[Proof of lemma \ref{qutrit}]
	Let $U=(u_{ij})_{1\leq i,j\leq 3}$ be a unitary matrix in a qutrit system. It is clear that $U$ is a two-level unitary matrix if it possesses more than one pair of off-diagonal zeros. Thus, it is sufficient to discuss $U$ in the following form, 
\begin{equation}
	U=\begin{pmatrix}{}
  u_{11}& 0 &u_{13} \\
  0& u_{22}&u_{23} \\
  u_{31}&u_{32} &u_{33}
\end{pmatrix},
\end{equation} 
that has only one pair of off-diagonal zeros.\\

\noindent Because $U^*U=I$, we have
\begin{equation}\label{5.2}
	\begin{aligned}
		I&=\begin{pmatrix}{}
  u_{11}^*& 0 &u_{31}^* \\
  0& u_{22}^*&u_{32}^* \\
  u_{13}^*&u_{23}^* &u_{33}^*
\end{pmatrix}\begin{pmatrix}{}
  u_{11}& 0 &u_{13} \\
  0& u_{22}&u_{23} \\
  u_{31}&u_{32} &u_{33}
\end{pmatrix}\\
\\
		&=\begin{pmatrix}{}
  u_{11}^2+u_{13}^2& u_{31}^*u_{32} &u_{11}^*u_{13}+u_{31}^*u_{33} \\
  u_{32}^*u_{31}& u_{22}^2+u_{32}^2&u_{11}^*u_{13}+u_{31}^*u_{13} \\
  u_{33}^*u_{31}+u_{13}^*u_{11}&u_{23}^*u_{22}+u_{23}^*u_{32} &u_{13}^2+u_{22}^2+u_{33}^2
\end{pmatrix}.
	\end{aligned}
\end{equation}
\\
Then we have $u_{31}^*u_{32}=0$, which indicates that at least one of $u_{31}$ and $u_{32}$ equals zero. Without loss of generality, we take $u_{31}=0$ and then we have $u_{11}^*u_{13}=0$ using Equation \ref{5.2}. Here, $u_{11}=0$ will result in a contradiction with the assumption that $U$ is a unitary matrix. Thus $u_{13}=0$ and enforces $U$ to be block diagonal. Furthermore, if we take $u_{32}=0$ instead of $u_{31}$, it will introduce a two-level unitary matrix $U$ that acts on another two vector components. This completes the proof.
\end{proof}

\begin{proof}[Proof of theorem \ref{tensor channel}]
	From Lemma \ref{Landau}, we can see that any unital quantum channel $\Phi$ over a single Qubit system is mixed unitary. Thus, for each $\Phi_i$, we take 
	$\Phi_i(\rho)=\sum_{j=1}^{N_i}p_{ij}U_{ij}\rho U_{ij}^*$ with $\sum_{j=1}^{N_i}p_{ij}=1$.\\
	
	\noindent We follow the approach in Theorem \ref{2d} to prove Theorem 12 by examining the properties of the complementary quantum channel $\Psi$ of $\Phi$, where
	\begin{equation}
		\Phi(\rho)=\sum p_{j_1j_2\cdots j_N}U_{1j_1}\otimes U_{2j_2}\otimes \cdots \otimes U_{Nj_N}\rho U_{1j_1}^*\otimes U_{2j_2}^*\otimes \cdots \otimes U_{Nj_N}^*	\end{equation}
where $p_{j_1j_2\cdots j_N}=p_{1j_1}p_{2j_2}\cdots p_{Nj_N}, 1\leq j_i\leq N_i$ for any $\rho\in D(H_N)$.\\
\\
Let $\rho_1=|0\rangle\langle0|$ and $\rho_2=|0\rangle\langle1|$. 
Then $\Phi$ preserves the distinguishable states $|\varphi_1\rangle$, $|\varphi_2\rangle$ if and only if $||\Psi(|\varphi_1\rangle\langle\varphi_2|)||_1=0$ or the followings always equals zero for different choices of the indices:
\begin{equation}
	\begin{aligned}
		\langle (U_{1j_1}^*\otimes &\cdots \otimes U_{Nj_N}^*)(U_{1i_1}\otimes \cdots \otimes U_{Ni_N}), |\varphi_1\rangle\langle\varphi_2|\rangle\\
		&=\langle (U_{1j_1}^*U_{1i_1})\otimes \cdots \otimes(U_{Nj_N}^*U_{Ni_N}), \rho_1\otimes \rho_1\otimes\cdots\otimes \rho_2\rangle\\
		&=\langle U_{1j_1}^*U_{1i_1}, \rho_1 \rangle \cdots \langle U_{(N-1)j_{N-1}}^*U_{(N-1)i_{N-1}}, \rho_1 \rangle \langle U_{Nj_N}^*U_{Ni_N}, \rho_2 \rangle.
	\end{aligned}
\end{equation}
We can observe that $\langle U_{Nj_N}^*U_{Ni_N}, \rho_2 \rangle=0$ should be true for any choice of the unitary operators. This indicates that $U_{Nj}^*U_{Ni}$ is always diagonal and then $A_j^*A_i$ are also diagonal for any $1\leq i, j\leq r_N$, which completes the proof.
\end{proof}

\begin{proof}[Proof of example \ref{correlated}]
	Suppose that the controlled phase damping channel $\Phi_(\rho)=\sum_{i,j=0}^1 A_{ij}\rho A_{ij}^*$ in a two-qubit system is generated by the tensor product of the two qubit quantum channels. Take quantum channels $\Phi_1(\rho)=A_1\rho A_1^*+A_2\rho A_2^*$ and  $\Phi_2(\rho)=B_1\rho B_1^*+B_2\rho B_2^*$ such that 
\begin{equation}
	A_{ij}=A_i\otimes B_j
\end{equation} 
for all $1\leq i,j\leq 2$.\\
\\
Now, consider the case when $i,j=1$, we have 
\begin{equation}
	A_{11}=(1-\lambda)I\otimes I \cdot C_{NOT}=A_1\otimes B_1
\end{equation}
which implies
\begin{equation}\label{5.5}
	(1-\lambda)\begin{pmatrix}{}
  1&0&0 &0\\
  0& 1& 0&0\\
  0&0& 0&1\\
  0& 0& 1&0
\end{pmatrix}
=\begin{pmatrix}{}
  a_1& b_1\\
  c_1& d_1
\end{pmatrix}
\otimes
\begin{pmatrix}{}
  a_2& b_2\\
  c_2& d_2
\end{pmatrix}
\end{equation}
\\
for $A_1=\begin{pmatrix}{}
  a_1& b_1\\
  c_1& d_1
\end{pmatrix}$ and 
$B_1=\begin{pmatrix}{}
  a_2& b_2\\
  c_2& d_2
\end{pmatrix}$.\\
\\
By considering the left corner of the diagonal block in Equation \ref{5.5}, we have $b_2=c_2=0$ and $a_1, a_2, d_2$ are nonzero. However, an analogous discussion in the right corner of the diagonal block shows that $b_2, c_2$ are nonzero, leadings to a contradiction. This implies that $\Phi_{C_{CPD}}$ is not a tensor product of qubit channels. \\
\end{proof}

\begin{proof}[Proof of lemma \ref{15}]
Let $\rho_1, \rho_2$ be the density operators associated with $|\varphi_1\rangle$ and $|\varphi_2\rangle$ and  $\Phi_S(\rho)=\sum_{k=i_1}^{i_s}s_kU_k\rho U_k^*$ be any length-s local operation of $\Phi$ with $s_k=\frac{p_k}{\sum_{k=i_1}^{i_s}p_k}$. Let $\Phi_{S^C}(\rho)=\sum_{k=j_1}^{j_{r-s}}s_kU_k\rho_i U_k^*$ be the complement channel of $\Phi_S$ defined by the other Kraus operators of $\Phi$. Then we have 
\begin{equation}
	\Phi(\rho)=(\sum_{k=i_1}^{i_s}p_k)\Phi_S(\rho)+(\sum_{k=j_1}^{j_{r-s}}p_k)\Phi_{S^C}(\rho)
\end{equation}
with $\sum_{k=i_1}^{i_s}p_k+\sum_{k=j_1}^{j_{r-s}}p_k=1$.\\
\\
Take $p_S=\sum_{k=i_1}^{i_s}p_k$ and $p_{S^c}=\sum_{k=j_1}^{j_{r-s}}p_k$. Then
\begin{equation}\label{3.6}
	\begin{aligned}
		1&\geq F(\Phi(\rho_1), \Phi(\rho_2))\\
		& =F(p_S\Phi_S(\rho_1)+ p_{S^c}\Phi_{S^c}(\rho_1), p_S\Phi_S(\rho_2)+ p_{S^c}\Phi_{S^c}(\rho_2))\\
		&\geq p_S F(\Phi_S(\rho_1), \Phi_S(\rho_2))+ p_{S^c} F(\Phi_{S^c}(\rho_1), \Phi_{S^c}(\rho_2))\\
		&\geq p_S F(\rho_1, \rho_2)+ p_{S^c} F(\rho_1, \rho_2)\\
		&= F(\rho_1, \rho_2).
	\end{aligned} 
\end{equation}
Then, if the quantum channel $\Phi$ preserves fidelity between $|\varphi_1\rangle$ and $|\varphi_2\rangle$, i.e. 
\begin{equation}
	F(\rho_1, \rho_2)=F(\Phi(\rho_1), \Phi(\rho_2)),
\end{equation}
 all inequalities in Equation \ref{3.6} must achieve equality. This indicates that each local operation $\Phi_S$ also preserves fidelity between $|\varphi_1\rangle$ and $|\varphi_2\rangle$. 
\end{proof}

\begin{proof}[Proof of lemma \ref{17}]
	(i)$\Rightarrow$ (ii) is obvious since $|c|^2+|d|^2=2$ and $|c-d^*|=0=||c|^2-|d|^2|$.\\
	(ii)$\Rightarrow$ (i): For complex numbers $c, d$ that satisfy (ii) , we have 
	\begin{equation}\label{3.11}
		(|c|+|d^*|)^2=|c|^2+|d|^2+2|c||d|\leq 2(|c|^2+|d|^2),
	\end{equation}
which implies 
\begin{equation}\label{3.12}
	|c|+|d^*|\leq \sqrt{2}(|c|^2+|d^*|^2)^{\frac{1}{2}}\leq \sqrt{2}\sqrt{2}=2.
\end{equation}
Thus, 
\begin{equation}\label{3.13}
	\begin{aligned}
		2|c-d^*|&=||c|^2-|d|^2|\\
		&=(|c|+|d|^*)||c|-|d^*||\\
		&\leq 2 ||c|-|d^*||,
	\end{aligned}
\end{equation}
which indicates that $|c-d^*|\leq ||c|-|d^*||$.\\
\\
Simultaneously, for complex numbers $c, d$, it is always true that 
\begin{equation}\label{3.14}
	|c-d^*|\geq ||c|-|d^*||.
\end{equation}
Then we have
	\begin{equation}\label{3.15}
		|c-d^*|= ||c|-|d^*||.
	\end{equation}  
	 By Equations \ref{3.12} and \ref{3.13}, $|c|=|d^*|=1$ and thus $c=d^*$ by Equation \ref{3.15}. 
\end{proof}

\begin{proof}[Proof of theorem \ref{18}]
	Let $\Phi_0(\rho)=t\rho+(1-t)U\rho U^*$, $U=U_1^*U_2$ and suppose $|\langle \varphi_1|\varphi_2\rangle|=\alpha$.  Because fidelity is invariant under unitary operations, we have that $\Phi$ preserves fidelity between $|\varphi_1\rangle$, $|\varphi_2\rangle$ if and only if $\Phi_0$ preserves fidelity between $|\varphi_1\rangle$, $|\varphi_2\rangle$.\\
	
	\noindent By Lemma \ref{13}, it is sufficient to show that $||Tr_H(|\Psi_2\rangle\langle\Psi_1|)||_1=\alpha$ if and only if $|\varphi_1\rangle$ and $|\varphi_2\rangle$ are $U$-symmetric and $|R_U(|\varphi_1\rangle, |\varphi_2\rangle)|=1$, where
	\begin{equation}
	|\Psi_i\rangle=\sum_{k=1}^2\sqrt{p_k}(U_k|\varphi_i\rangle\otimes |k\rangle).
\end{equation}
Let $A=\begin{pmatrix}
	t&b\\
	a&1-t
\end{pmatrix}$, where $a=\sqrt{t(1-t)}|R_U(|\varphi_1\rangle, |\varphi_2\rangle)$ and  $b=\sqrt{t(1-t)}|R_{U^*}(|\varphi_1\rangle, |\varphi_2\rangle)$. Then 
\begin{equation}
	\begin{aligned}
		||Tr_H(|\Psi_2\rangle|\langle\Psi_1|)||_1&=\left|\left|
		\begin{pmatrix}
			t\langle\varphi_1\varphi_2\rangle &\sqrt{t(1-t)}\langle\varphi_1 U^*\varphi_2\rangle\\
			\sqrt{t(1-t)}\langle\varphi_1 U\varphi_2\rangle & (1-t)\langle\varphi_1|\varphi_2\rangle
		\end{pmatrix}\right|\right|_1\\
		&=|\langle\varphi_1\varphi_2\rangle|\left|\left|
		\begin{pmatrix}
			t &b\\
			a & 1-t
		\end{pmatrix}\right|\right|_1\\
		&=\alpha ||A||_1.
	\end{aligned}
\end{equation} 
Thus $\Phi$ preserves fidelity if and only if $||A||_1=1$.
We now consider the case when the nuclear norm of $A$ is 1.\\

\noindent For the matrix $A$, under the Pauli basis $\{I, \sigma_x, \sigma_y, \sigma_z\}$, we have
\begin{equation}
	A=z_0I+z_1\sigma_x+z_2\sigma_y+z_3\sigma_z, 
\end{equation} 
where $z_0=\frac{1}{2}$, $z_1=\frac{a+b}{2}$, $z_2=\frac{a-b}{2i}$, $z_3=t-\frac{1}{2}$.
Then $\Phi$ preserves the fidelity between $|\varphi_1\rangle$ and $|\varphi_2\rangle$ if and only if $||A||_1=1$. Suppose that the singular values of $A$ are $\sigma_{1}$ and $\sigma_{2}$.\\

\noindent By Lemma \ref{16}, we have 
\begin{equation}\label{3.21}
	\begin{cases}\sigma_1^2+\sigma_2^2=2(|z_0|^2+|z_1|^2+|z_2|^2+|z_3|^2)&\\ \sigma_1\sigma_2=|z_0^2-z_1^2-z_2^2-z_3^2|&\end{cases} .
\end{equation}
Then $\Phi$ preserves fidelity if and only if $||A||_1=\sigma_1+\sigma_2=1$, which is equivalent to the following equations by Equation \ref{3.21}.
\begin{equation}
	\begin{aligned}
		&|z_0|^2+|z_1|^2+|z_2|^2+|z_3|^2+|z_0^2-z_1^2-z_2^2-z_3^2|=\frac{1}{2}\\
		&\frac{1}{4}+(t-\frac{1}{2})^2+\left|\frac{a+b}{2}\right|^2+\left|\frac{a-b}{2i}\right|^2+\left|\frac{1}{4}-\left(\frac{a+b}{2}\right)^2-\left(\frac{a-b}{2i}\right)^2-\left(t-\frac{1}{2}\right)^2\right|=\frac{1}{2}\\
		&t^2-t+\frac{1}{2}(|a|^2+|b|^2)+|t-t^2-ab|=0\\
		&|t^2-t+ab|=t-t^2-\frac{1}{2}(|a|^2+|b|^2).
	\end{aligned}
\end{equation}	
The above equation holds if and only if 
\begin{equation}\label{3.22}
	\begin{cases}t-t^2-\frac{1}{2}(|a|^2+|b|^2)\geq 0&\\ 
	|t^2-t+ab|^2=(t-t^2-\frac{1}{2}(|a|^2+|b|^2))^2&\end{cases} .
\end{equation}
Take $a_0=\frac{1}{\sqrt{t(1-t)}}a=R_U(|\varphi_1\rangle, |\varphi_2\rangle)$ and $b_0=\frac{1}{\sqrt{t(1-t)}}b=R_{U^*}(|\varphi_1\rangle, |\varphi_2\rangle)$.
The first equation in Equation \ref{3.22} turns to $|a|^2+|b|^2\leq 2t(1-t)$ and then 
\begin{equation}\label{3.23}
	|a_0|^2+|b_0|^2\leq 2.
\end{equation} 
Further, the second equation from Equation \ref{3.22} is equivalent to the following
\begin{equation}\label{3.24}
	\begin{aligned}
		(t^2-t+ab)(t^2-t+a^*b^*)&=(t^2-t)^2+\frac{1}{4}(|a|^2+|b|^2)^2+(t^2-t)(|a|^2+|b|^2)\\
		|a|^2|b|^2+(t^2-t)(ab+a^*b^*)&=\frac{1}{4}(|a|^2+|b|^2)^2+(t^2-t)(|a|^2+|b|^2)\\
		(t^2-t)(|a|^2+|b^2|-ab-a^*b^*)&=|a|^2|b|^2-\frac{1}{4}(|a|^4+|b|^4+2|a|^2|b|^2)\\
		(t-t^2)|a-b^*|^2&=\frac{1}{4}(|a|^2-|b|^2)^2\\
		2|a_0-b_0^*|&=||a_0|^2-|b_0^*|^2|.
	\end{aligned}
\end{equation}
By Lemma 17, Equation \ref{3.23}, Equation \ref{3.24} are true at the same time if and only if $a_0, b_0$ are unit complex numbers that satisfy $a_0=b_0^*$.
Then $\Phi$ preserves fidelity if and only if  $|R_U(|\varphi_1\rangle, |\varphi_2\rangle)|=1$ and $|\varphi_1\rangle$, $|\varphi_2\rangle$ are symmetric under $U$, which is based on the observation that
\begin{equation}
	b_0^*=R_{U^*}(|\varphi_1\rangle, |\varphi_2\rangle)^*=\left(\frac{\langle\varphi_1 U^* \varphi_2\rangle}{\langle \varphi_1|\varphi_2\rangle}\right)^*=\frac{\langle\varphi_2 U \varphi_1\rangle}{\langle \varphi_2|\varphi_1\rangle}=\frac{\langle\varphi_1 U \varphi_2\rangle}{\langle \varphi_1|\varphi_2\rangle}=a_0
\end{equation}
 and 
\begin{equation}
a_0=\frac{\langle\varphi_1 U \varphi_2\rangle}{\langle \varphi_1|\varphi_2\rangle}.
\end{equation}
\end{proof}

\begin{proof}[Proof of theorem \ref{20}]
	Let $S$ be the largest set of pure states from $H$ that are preserved by a general phase damping channel $\Phi$. For any superposition $|\varphi\rangle =\begin{pmatrix}
		a\\
	    b
	\end{pmatrix}\in S$ with $|a|^2+|b|^2=1$, we can verify that
\begin{equation}\label{4.3}
	F(|0\rangle\langle0|, |\varphi\rangle\langle\varphi|)=|a|=F(\Phi(|0\rangle\langle0|),\Phi( |\varphi\rangle\langle\varphi|)),
\end{equation}
which is also true between the states $|1\rangle$ and $|\varphi\rangle$, thus $E_2\subset S$.\\

\noindent Further, by Lemma \ref{15}, $S$ is preserved by any length-2 local operation $\Phi_0$ of $\Phi$. Without losing generality, take
\begin{equation}
		\Phi_0(\rho)=p\rho+(1-p_1)D\rho D^*,
\end{equation}
where $D=\diag\{1, e^{i\theta}\}$ for $\theta\in \mathcal{R}$, $0<p<1$ and $\rho\in D(H)$.\\
\\
According to Theorem \ref{2d}, superpositions $|\varphi_1\rangle$ and $|\varphi_2\rangle$ are not preserved by $\Phi_0$ if  they are distinguishable. Taking $|\varphi_1\rangle$, $|\varphi_2\rangle$ as non-distinguishable states, Theorem \ref{18} guarantees that the fidelity can be preserved if $|\varphi_1\rangle $ and $|\varphi_2\rangle $ are $D-$symmetric and $|R_D(|\varphi_1\rangle, |\varphi_2\rangle)|=|<\varphi_1, \varphi_2>|$.\\
\\
By removing the relative phases of $|\varphi_1\rangle$ and $|\varphi_2\rangle$, we take $|\varphi_1\rangle=\begin{pmatrix} r\\ \sqrt{1-r^2}e^{i\theta_1}\ \end{pmatrix}$, $|\varphi_2\rangle=\begin{pmatrix} s\\ \sqrt{1-s^2}e^{i\theta_2}\ \end{pmatrix}$ for nonzero $r, s \in \mathcal{R}$ and $\theta_1. \theta_2\in \mathcal{R}$. Then  $|\varphi_1\rangle $ and $|\varphi_2\rangle $ are $D-$symmetric if and only if 
\begin{equation}
	\frac{(r, \sqrt{1-r^2}e^{-i\theta_1})D\begin{pmatrix} s\\ \sqrt{1-s^2}e^{i\theta_2}\ \end{pmatrix}}{rs+\sqrt{1-r^2}\sqrt{1-s^2}e^{i(\theta_2-\theta_1)}}=\frac{(s, \sqrt{1-s^2}e^{-i\theta_2})D\begin{pmatrix} r\\ \sqrt{1-r^2}e^{i\theta_1}\ \end{pmatrix}}{rs+\sqrt{1-r^2}\sqrt{1-s^2}e^{i(\theta_1-\theta_2)}}
\end{equation}
which implies,
\begin{align}
	&rs\sqrt{1-r^2}\sqrt{1-s^2}e^{i(\theta_2-\theta_1)}-rs\sqrt{1-r^2}\sqrt{1-s^2}e^{i(\theta_1-\theta_2)}=\\
	&(rs\sqrt{1-r^2}\sqrt{1-s^2}e^{i(\theta_2-\theta_1)}-rs\sqrt{1-r^2}\sqrt{1-s^2}e^{i(\theta_1-\theta_2)})e^{i\theta}
\end{align}	
Then we have $\theta=0$ (contradicts with $Cr(\Phi)=2$) or $\theta_1-\theta_2=k\pi$ for $k=0$ or $1$.\\

\noindent Further, the condition $|R_D(|\varphi_1\rangle, |\varphi_2\rangle)|=|<\varphi_1, \varphi_2>|$ is equivalent to
\begin{equation}\label{4.8}
	|rs+\sqrt{1-r^2}\sqrt{1-s^2}e^{i(\theta+k\pi)}|=|rs+\sqrt{1-r^2}\sqrt{1-s^2}e^{ik\pi}|.
\end{equation}	
By squaring both sides of Equation \ref{4.8}, we have
	\begin{align}
		rs\sqrt{1-r^2}\sqrt{1-s^2}e^{-i(\theta+k\pi)}&+rs\sqrt{1-r^2}\sqrt{1-s^2}e^{i(\theta+k\pi)}\\
		=rs\sqrt{1-r^2}\sqrt{1-s^2}e^{-ik\pi}&+rs\sqrt{1-r^2}\sqrt{1-s^2}e^{ik\pi},
	\end{align}
which shows that
\begin{equation}
		e^{-i(\theta+k\pi)}+e^{i(\theta+k\pi)}=e^{-ik\pi}+e^{ik\pi},
\end{equation}
and then $\theta=0$ (contradiction).\\

\noindent We conclude that such a superposition pair $ |\varphi_1\rangle $, $ |\varphi_2\rangle $ cannot exist in $S$, and the size of $S$ is at most 3, occurring only when $ E_2 \subset S $.
\end{proof}

\begin{proof}[Relation between Phase Damping and General Phase Damping channels]
We first show that the Phase damping channel $\Phi_P$ is a general phase damping channel.\\
\\
 Consider a rank-2 general phase damping channel $\Phi(\rho)=p_1\rho+(1-p_1)D_0\rho D_0^*$ with $0\leq p_1\leq 1$ and $D_0=\begin{pmatrix}{}
  1&0  \\
  0&e^{i\theta} 
\end{pmatrix}
$ for some $\theta\in \mathcal{R}$. Then by Theorem \ref{Choi-thm2}, it is sufficient to show that the Kraus operators $\left\{ K_1, K_2\right\}$ and $\left\{\sqrt{p_1}I, \sqrt{1-p_1}D_0\right\}$ are equivalent under some unitary matrix 
\begin{equation}
	U=\begin{pmatrix}{}
  u_1&u_2  \\
  u_3&u_4 
\end{pmatrix}.
\end{equation}
By taking $\theta=\pi$, $p_1=\frac{1+\sqrt{1-\lambda}}{2}$ and 
\begin{equation}
	U=\begin{pmatrix}{}
  \sqrt{p_1}& \sqrt{\frac{p_1}{\lambda}}(1-\sqrt{1-\lambda})\\
  \sqrt{1-p_1}&  \sqrt{\frac{1-p_1}{\lambda}}(e^{i\theta}-\sqrt{1-\lambda})
\end{pmatrix},
\end{equation}
we have 
\begin{equation}
	\Phi_P(\rho)=\frac{1+\sqrt{1-\lambda}}{2}\rho+\frac{1-\sqrt{1-\lambda}}{2}\begin{pmatrix}{}
  1&0  \\
  0&-1 
\end{pmatrix}\rho
\begin{pmatrix}{}
  1&0  \\
  0&-1 
\end{pmatrix},
\end{equation}
which is in the form of a general phase damping channel for all $\rho\in D(H)$ and $p_i\neq 0, i=1,2$.\\
\\
We can also easily verify by the same way that the well known quantum errors introduced by Amplitude Damping Channel $\Phi_A$ and Depolarizing Channel $\Phi_D$ are not General Phase Damping channel if they are not degenerated. For any $\rho\in D(H)$, the Amplitude damping channel $\Phi_A$ with amplitude parameter $0\leq \eta \leq 1$ is usually defined as $\Phi_A(\rho)=S_1\rho S_1^*+S_2\rho S_2^*$ with
\begin{equation}
		S_1=\begin{pmatrix}{}
  1&0 \\
  0&\sqrt{1-\eta} 
\end{pmatrix}, 
S_2=\begin{pmatrix}{}
  0&\sqrt{\eta} \\
  0&0 
\end{pmatrix}
	\end{equation}
and the depolarizing channel $\Phi_D$ with depolarizing parameter $0\leq \zeta \leq \frac{4}{3}$ is defined as $\Phi_D(\rho)=(1-\zeta)\rho+\frac{\zeta}{2}I$ with Kraus operators
\begin{equation}
	T_1=\sqrt{1-\frac{3\zeta}{4}}I, T_2=\sqrt{\frac{\zeta}{4}}\sigma_x, T_3=\sqrt{\frac{\zeta}{4}}\sigma_y, T_4=\sqrt{\frac{\zeta}{4}}\sigma_z. 
\end{equation}  
\\

\end{proof}

\section*{Statements and Declarations}

\noindent \textbf{Competing Interests.}
The authors declare no competing interests.\\
\\
\noindent \textbf{Author Contributions.}
Both authors contributed equally to the conception of the study, the analysis, and the writing of the manuscript. Both authors reviewed and approved the final version of the manuscript.\\
\\
\noindent \textbf{Data availability statement:}
No new data were created or analyzed in this study.



\begin{thebibliography}{50}


\bibitem{Preskill}
J. Preskill, Reliable quantum computers, Proceedings of the Royal Society of London Series A: Mathematical, Physical and Engineering Sciences. 454 (1998) 385-410.

\bibitem{Gisin}
N. Gisin, and R. Thew, Quantum communication, Nature photonics. 1(3) (2007) 165-171.

\bibitem{Sidhu}
J. Sidhu, S. Joshi, M Gündoğan, Advances in space quantum communications, IET Quantum Communication. 2(4) (2021) 182-217.

\bibitem{Fuchs}
C. A. Fuchs, Distinguishability and accessible information in quantum theory, arXiv preprint. (1996).

\bibitem{Monta}
A. Montanaro, On the distinguishability of random quantum states, Communications in mathematical physics. 273 (2007) 619-636.
\bibitem{Wang}
X. Wang, M. Wilde, Resource theory of asymmetric distinguishability for quantum channels, Physical Review Research. 1(3) (2019) 033169.

\bibitem{Benatti}
F. Benatti, R. Floreanini, F. Franchini, U Marzolino, Entanglement in indistinguishable particle systems, Physics Reports. 878 (2020) 1-27.

\bibitem{Streltsov}
A. Streltsov, G. Adesso, M. Plenio, Colloquium: Quantum coherence as a resource, Reviews of Modern Physics. 89(4) (2017) 041003.
\bibitem{Marvian}
I. Marvian, R. Spekkens, How to quantify coherence: Distinguishing speakable and unspeakable notions, Physical Review A. 94(5) (2016) 052324.
\bibitem{Knill2000}
E. Knill, R. Laflamme, and L. Viola, Theory of quantum error correction for general noise, Physical Review Letters 84 (2000) 2525--2528.
\bibitem{Lidar2013}
D. A. Lidar and T. A. Brun, Quantum Error Correction, Cambridge University Press, 2013.

\bibitem{Kumar2025}
S. Kumar and D. A. Lidar, Quantum property preservation, PRX Quantum 6 (2025) 010335.


\bibitem{watrous2018}
J. Watrous, The theory of quantum information, Cambridge university press. (2018).

\bibitem{barman2000}
H. Barnum, E Knill, M. Nielsen, On quantum fidelities and channel capacities, IEEE Transactions on Information Theory. 46(4) (2000) 1317-1329.


\bibitem{jozsa1994}
R. Jozsa, Fidelity for mixed quantum states, Journal of modern optics. 41(12) (1994) 2315-2323.


\bibitem{landau1993birkhoff}
L. J. Landau, R. Streater, On Birkhoff's theorem for doubly stochastic completely positive maps of matrix algebras, Linear Algebra and its Applications. 193 (1993) 107--127.



\bibitem{cozzini2007}
M. Cozzini, R. Ionicioiu, P. Zanardi, Quantum fidelity and quantum phase transitions in matrix product states, Physical Review B: Condensed Matter and Materials Physics. 76(10) (2007) 104420.


\bibitem{viamon2009}
G. Viamontes, I. Markov, J. Hayes, Quantum circuit simulation, Springer. (2009).


\bibitem{chir2008}
G. Chiribella, G. D’Ariano, P. Perinotti, Quantum circuit architecture, Physical review letters. 101(6) (2008) 060401.




\bibitem{gupta2015}
V. Gupta, V. Prakash, P. Mandayam, V. Sunder, The functional analysis of quantum information theory, Lecture Notes in Physics. 902 (2015).


\bibitem{hayashi2016}
M. Hayashi, Quantum information theory, Springer. (2016).



\bibitem{nielson}
M. Nielsen, I. Chuang, Quantum computation and quantum information, Cambridge university press. (2001).


\bibitem{bry2002}
J. Brylinski, R. Brylinski, Universal quantum gates, Mathematics of quantum computation. Chapman and Hall/CRC, 117-134 (2002).


\bibitem{van2021}
J. Van de Wetering, Constructing quantum circuits with global gates, New Journal of Physics. 23(4) (2021) 043015.


\bibitem{crook2020}
D. Divincenzo, Quantum gates and circuits, Proceedings of the Royal Society of London Series A: Mathematical, Physical and Engineering Sciences. 454 (1998) 261-276.

\bibitem{li2013}
C. Li, R. Roberts, X. Yin, Decomposition of unitary matrices and quantum gates, International Journal of Quantum Information. 11(1) (2013) 1350015.


\bibitem{levick2017}
J. Levick, D. Kribs, R. Pereira, Quantum privacy and Schur product channels, Reports on Mathematical Physics. 80(3) (2017) 333-347.


\bibitem{zhang2022}
Q. Zhang, I. Nechita, A fisher information-based incompatibility criterion for quantum channels, Entropy 24(6) (2022) 805.


\bibitem{jochym2013}
T. Jochym O'Connor, D. Kribs, R. Laflamme, S. Plosker, Private quantum subsystems, Physical review letters. 111(3) (2013) 030502.


\bibitem{hu}
M. Hu, H. Wang, Protecting quantum Fisher information in correlated quantum channels, Annalen der Physik. 532(1) (2020) 1900378.


\bibitem{giov2012}
V. Giovannetti, G. Palma, Master equations for correlated quantum channels, Physical review letters. 108(4) (2012) 040401.


\bibitem{addis2016}
C. Addis, G. Karpat, C. Macchiavello, S. Maniscalco, Dynamical memory effects in correlated quantum channels, Physical Review A. 94(3) (2016) 032121.


\bibitem{awas2020}
N. Awasthi, S. Haseli, S. Johri, S. Salimi, H. Dolatkhah, A. Khorashad, Quantum speed limit time for correlated quantum channel, Quantum Information Processing. 19 (2020) 1-17.


\bibitem{buscem2010}
F. Buscemi, N. Datta, The quantum capacity of channels with arbitrarily correlated noise, IEEE Transactions on Information theory. 56(3) (2010) 1447-1460.


\bibitem{yeo2003}
Y. Ye, A. Skeen, Time-correlated quantum amplitude-damping channel, Physical Review A. 67(6) (2003) 064301.


\bibitem{pless1998}
V. Pless, Introduction to the theory of error-correcting codes, John Wiley and Sons. (2011).


\bibitem{vans2013}
S. Vanstone, P. Van Oorschot, An introduction to error correcting codes with applications, Springer Science \& Business Media. (2013).


\bibitem{terhal2015}
B. Terhal, Quantum error correction for quantum memories, Reviews of Modern Physics. 87(2) (2015) 307-346.


\bibitem{hsieh2007}
M. Hsieh, I. Devetak, T. Brun, General entanglement-assisted quantum error-correcting codes,  Physical Review A: Atomic, Molecular, \& Optical Physics. 76(6) (2007) 062313.










\end{thebibliography}
\end{document}